\def \be{\begin{equation}}
\def \ee{\end{equation}}
\newlength\figheight
\newlength\figwidth
\definecolor{darkred}{rgb}{.7,0,0}
\definecolor{darkgreen}{rgb}{0,0.7,0}
\definecolor{darkblue}{rgb}{0,0,0.7}
\newcommand{\us}{\mathsf{U}}
\newcommand{\rs}{\mathsf{r}}
\newcommand{\bp}{\begin{proof}}
\newcommand{\ep}{\end{proof}}
\newcommand{\bbR}{\mathbb{R}}
\newcommand{\cV}{\mathcal{V}}
\newcommand{\phs}{\phi^{\mathsf{s}}}
\newcommand{\phr}{\phi^{\mathsf{r}}}
\newcommand{\vs}{v^{\mathsf{s}}}
\newcommand{\vr}{v^{\mathsf{r}}}
\newcommand{\vars}{\varphi^{\mathsf{s}}}
\newcommand{\varr}{\varphi^{\mathsf{r}}}
\newcommand{\Uhs}{U^{h, \sigma}}
\newcommand{\Uhz}{U^{h, 0}}
\newcommand{\Phs}{\mathbb{P}^{h, \sigma}}
\newcommand{\Phz}{\mathbb{P}^{h, 0}}
\newcommand{\psh}{\Psi_{h}}
\newcommand{\pss}{\Psi_{\tau}}
\newcommand{\pst}{\Psi_{s-t_k}}
\newcommand{\gh}{g^h}
\newcommand{\hC}{C^h}
\newtheorem{theo}{Theorem}[section]
\newtheorem{theorem}[theo]{Theorem}
\newtheorem{remark}[theo]{Remark}
\newtheorem{example}[theo]{Example}
\def\cV{\mathcal{V}}
\newtheorem{assumption}{Assumption}
\newcommand{\bes}{\begin{equation*}}
\newcommand{\ees}{\end{equation*}}
\newcommand{\bea}{\begin{eqnarray}}
\newcommand{\eea}{\end{eqnarray}}
\newcommand{\R}{\mathbb{R}}
\newcommand{\hL}{{\mathcal L}^h}
\newcommand{\oL}{{\mathcal L}}
\newcommand{\aL}{\widehat{{\mathcal L}}^h}
\newcommand{\tL}{\widetilde{{\mathcal L}}^h}
\newcommand{\cF}{\mathcal{F}}
\newcommand{\E}{\mathbb{E}^h}
\def \IE{\mathbb E}
\DeclareMathOperator*{\argmax}{arg\,max}
\newcommand{\LRARR}[4]{{\mbox{\raise 0.4 mm \hbox{$#1$}}} \;
  \mathop{\stackrel{\displaystyle\longrightarrow}\longleftarrow}^{#3}_{#4}
  \; {\mbox{\raise 0.4 mm\hbox{$#2$}}}}
\numberwithin{equation}{section}
\theoremstyle{plain}
\begin{document}

\begin{frontmatter}
\title{Probability Measures for Numerical Solutions of Differential Equations 
}
\runtitle{Probabilistic Solutions of Differential Equations}

\begin{aug}
\author{\fnms{Patrick} \snm{Conrad}\thanksref{t1}\ead[label=e1]{p.conrad@warwick.ac.uk}},
\author{\fnms{Mark} \snm{Girolami}\thanksref{t2}\ead[label=e2]{m.girolami@warwick.ac.uk}},
\author{\fnms{Simo} \snm{S\"{a}rkk\"{a}}\thanksref{t4}\ead[label=e5]{simo.sarkka@aalto.fi}},
\author{\fnms{Andrew} \snm{Stuart}\thanksref{t3}\ead[label=e3]{a.m.stuart@warwick.ac.uk}}
\and
\author{\fnms{Konstantinos} \snm{Zygalakis}\ead[label=e4]{k.zygalakis@soton.ac.uk}}



\thankstext{t1}{Supported by EPSRC grant CRiSM EP/D002060/1}
\thankstext{t2}{Supported by EPSRC Established Career Research Fellowship EP/J016934/2 }
\thankstext{t3}{Supported by EPSRC Programme Grant EQUIP EP/K034154/1}
\thankstext{t4}{Supported by Academy of Finland Research Fellowship 266940}
\runauthor{P. Conrad et al.}

\affiliation{University of Warwick\thanksmark{m1}, University of Southampton\thanksmark{m2}, Aalto University\thanksmark{m3}}

\address{Department of Statistics\\University of Warwick\\
Coventry, CV4 7AL\\United Kingdom\\
\printead{e1}\\
\phantom{E-mail:\ }\printead*{e2}}

\address{Mathematics Institute \\University of Warwick\\
Coventry, CV4 7AL\\United Kingdom\\
\printead{e3}\\
}

\address{Department of Mathematics\\
University of Southampton\\
Southampton, SO17 1BJ\\United Kingdom\\
\printead{e4}\\
}

\address{ Department of Electrical Engineering and Automation\\
Aalto University\\
02150 Espoo\\
Finland\\
\printead{e5}\\
}
\end{aug}

\begin{abstract}
In this paper, we present a formal quantification of epistemic uncertainty induced by numerical solutions of ordinary and partial differential equation models.
Numerical solutions of differential equations contain inherent uncertainties due to the finite dimensional approximation of an unknown and implicitly defined function. When statistically analysing models based on differential equations describing physical,  or other naturally occurring, phenomena, it is therefore important to explicitly account for the uncertainty introduced by the numerical method. This enables objective determination of its importance relative to other uncertainties, such as those caused by data contaminated with noise or model error induced by missing physical or inadequate descriptors. To this end we show that a wide variety of existing solvers can be randomised, inducing a probability measure over the solutions of such differential equations. These measures exhibit contraction to a Dirac measure around the true unknown solution, where the rates of convergence are consistent with the underlying deterministic numerical method. 
Ordinary differential equations and elliptic partial differential equations are used to illustrate the approach to quantifying uncertainty in both the statistical analysis of the forward and inverse problems.
\end{abstract}

\begin{keyword}[class=MSC]
\kwd{62F15}
\kwd{65N75}
\kwd{65L20}
\end{keyword}

\begin{keyword}
\kwd{Differential Equations} 
\kwd{Numerical Analysis}
\kwd{Probabilistic Numerics} 
\kwd{Inverse Problems}
\kwd{Uncertainty Quantification}
\end{keyword}

\end{frontmatter}


\section{Introduction and Motivation}

The numerical analysis literature has developed a large range of efficient algorithms for solving ordinary and partial differential equations, which are typically designed to solve a single problem as efficiently as possible  \cite{hairer1993solving,eriksson1996computational}. When classical numerical methods are placed within statistical analysis, however, we argue that significant difficulties can arise.
Statistical models are commonly constructed by naively replacing the analytical solution with a numerical method; these models are then typically analysed by computational methods that sample the solver with many slightly varying configurations. This is problematic, not only because the numerical solutions are merely approximations,  but also because the assumptions made by the numerical method induce correlations across multiple numerical solutions.
 While the distributions of interest commonly do converge asymptotically as the solver mesh becomes dense (e.g., in statistical inverse problems \cite{DS15}), we argue that at a finite resolution the analysis may be vastly overconfident as a result of these unmodelled errors.

The purpose of this paper is to address these issues by the construction and rigorous analysis of novel probabilistic integration methods
for both ordinary and partial differential equations. The approach in both cases is similar: we identify the key discretisation assumptions and introduce a local random field, in particular a Gaussian field, to reflect our uncertainty in those assumptions. The probabilistic solver may then be sampled repeatedly to interrogate the uncertainty in the solution. For a wide variety of commonly used numerical methods, our construction is straightforward to apply and preserves the formal order of convergence of the original method.

Furthermore, we demonstrate the value of these probabilistic solvers in statistical inference settings. 
Analytic and numerical examples show that using a classical non-probabilistic solver with inadequate discretisation when performing inference can lead to inappropriate and misleading posterior concentration in a Bayesian setting. In contrast, the probabilistic solver reveals the structure of uncertainty in the solution, naturally limiting posterior concentration as appropriate.

Our strategy fits within the emerging field known as Probabilistic Numerics \cite{Hennig2015}, a perspective on computational methods pioneered by 
Diaconis \cite{diaconis1988bayesian}, and subsequently 
Skilling \cite{skilling1992bayesian}. 
Under this framework, solving differential equations is recast as the statistical inference problem of finding the latent, unobserved function that solves the differential equation, based on observable constraints.
The output from such an inference problem is a probability measure over functions that satisfy the constraints imposed by the specific differential equation. This measure formally quantifies the uncertainty in candidate solution(s) of the differential equation and allows us to coherently propagate this uncertainty when carrying out statistical analyses, for example, in
uncertainty quantification \cite{S15} or Bayesian inverse problems \cite{DS15}.

\subsection{Review of existing work}
Earlier work in the numerical analysis literature including randomisation in the approximate integration of ordinary differential equations (ODEs) includes \cite{coul99,sten95}. However, the
connections between probability and numerical analysis were elucidated in the foundational paper \cite{diaconis1988bayesian}, which explains this connection in
the context of Bayesian statistics and sets out an agenda for Probabilistic Numerics. It was in \cite{skilling1992bayesian} that a concrete
demonstration of how numerical integration of ordinary differential
equations could be interpreted as a Bayesian inference method,
and this idea was converted into an operational methodology,
with consistency proofs given in \cite{chkrebtii2013bayesian}.

The methodology detailed in \cite{skilling1992bayesian, chkrebtii2013bayesian} (explored in parallel in \cite{Hennig2014})
builds a probabilistic structure which contains two sources of randomness: an overarching Gaussian
process structure and an internal randomisation of the mean process, resulting
in different draws of the mean for each realisation. However, careful
analysis reveals that, because of the conjugacy inherent in linear Gaussian models, 
the posterior variance of the outer Gaussian process is independent of the observed constraints, and thus does not reveal any information about the uncertainty in the particular problem being solved. The randomisation of the mean process is more useful, and is related to the strategy we present here. Those initial efforts were limited to first order Euler schemes \cite{hairer1993solving}, and were later extended to Runge--Kutta methods \cite{schober2014probabilistic}, although the incomplete characterisation of uncertainty remains. The existing literature has focussed only on ordinary differential equations and a formal probabilistic construction for partial differential equations (PDEs) that is generally applicable remains an open issue. All of these open issues are resolved by this work. Unlike \cite{skilling1992bayesian, chkrebtii2013bayesian, schober2014probabilistic}, which are exclusively Bayesian by construction, our probabilistic solvers return a probability measure which can then be employed in either frequentist or Bayesian inference frameworks.

Our motivation for enhancing inference problems with models of discretisation error is similar to the more general concept of model error, as developed by \cite{kennedy2001bayesian}. However, our focus on errors arising from the discretisation of differential equations leads to more specialised methods. Existing strategies for discretisation error include empirically fitted Gaussian models for PDE errors \cite{kaipio2007statistical} and randomly perturbed ODEs \cite{arnold2013linear}; the latter partially coincides with our construction, but our motivation and analysis is distinct. Recent work \cite{capistran2013bayesian} uses Bayes factors to analyse the impact of discretisation error on posterior approximation quality. Probabilistic models have also been used to study error propagation
due to rounding error; see \cite{hairer2008achieving}.

The remainder of the paper has the following structure: Section \ref{sec:ode} introduces and formally analyses the proposed probabilistic solvers for ODEs. 
Section \ref{sec:numerics} explores the characteristics of random solvers employed in the statistical analysis of both forward and inverse problems. 
Then, we turn to elliptic PDEs in Section \ref{sec:pde}, where several key steps of the construction of probabilistic solvers and their analysis have intuitive analogues in the ODE context. Finally, an illustrative example of PDE inference problem is presented in Section \ref{sec:pdenumerics}.\footnote{Supplementary materials and code are available online: \url{http://www2.warwick.ac.uk/fac/sci/statistics/staff/academic-research/girolami/probints/}}

\section{Probability Measures via Probabilistic Time Integrators}
\label{sec:ode}


Consider the following ordinary differential equation (ODE):
\begin{equation} \label{eq:ODE}
\frac{du}{dt}=f(u), \quad u(0)=u_{0},
\end{equation}
where $u(\cdot)$ is a continuous function taking values in $\R^n$.\footnote{To simplify our discussion we assume that the ODE is autonomous, that is, $f(u)$ is independent of time. Analogous theory can be developed for time dependent forcing.}
We let $\Phi_t$ denote the flow-map for Equation \eqref{eq:ODE}, so that
$u(t)=\Phi_t\bigl(u(0)\bigr)$. The conditions ensuring that this solution exists will be formalised in Assumption \ref{a:2}, below.

Deterministic numerical methods for the integration of this equation
on time-interval $[0,T]$ will produce an approximation to the equation on a mesh of
points $\{t_k=kh\}_{k=0}^{K}$, with $Kh=T$, (for simplicity we assume
a fixed mesh). Let $u_k=u(t_k)$ denote the exact solution of
\eqref{eq:ODE} on the mesh and $U_k\approx u_k$ denote the approximation computed by the numerical method, based on evaluating  $f$ (and possibly higher derivatives)
at a finite set of points that are generated during numerical
integration.  
Typically these methods output a single discrete solution $\{U_k\}_{k=0}^K$, possibly augmented with some type of error indicator, but do not statistically quantify the uncertainty remaining in the path.

%
Let $X_{a,b}$ denote the Banach space $C([a,b];\R^n)$. 
The exact solution of \eqref{eq:ODE} on the time-interval $[0,T]$
may be viewed as a Dirac
measure $\delta_u$ on $X_{0,T}$ at the element $u$ that 
solves the ODE. We will construct a probability measure
$\mu^h$ on $X_{0,T}$, that is straightforward to sample from both on and
off the mesh, for
which $h$ quantifies the size of the discretisation step employed, and whose 
distribution reflects the uncertainty resulting from the solution of the ODE. 
Convergence of the numerical method is then related to the contraction
of $\mu^h$ to $\delta_u$.

We briefly summarise the construction of the numerical method.
Let $\psh:\R^n \to \R^n$ denote a classical deterministic
one-step numerical integrator
over time-step $h$, a class including all Runge--Kutta methods
and Taylor methods for ODE numerical integration \cite{hairer1993solving}.
Our numerical methods will have the property that, on the mesh,
they take the form
\begin{equation}
U_{k+1}=\psh(U_k)+\xi_k(h),
\label{eq:NM}
\end{equation}
where $\xi_k(h)$ are suitably scaled, i.i.d. Gaussian random variables. That is, the random solution iteratively takes the standard step, $\psh$, followed by perturbation with a random draw, $\xi_k(h)$, modelling uncertainty that accumulates between mesh points. The discrete path $\{U_k\}_{k=0}^K$ is straightforward to sample and in general is not a Gaussian process. Furthermore, the discrete trajectory can be extended into a continuous time approximation of the 
ODE, which we define as a draw from the measure $\mu^h$. 

The remainder of this section develops these solvers in detail and proves strong convergence of the random solutions to the exact solution, implying that $\mu^h \to \delta_u$ in an appropriate sense. Finally, we establish a close relationship between our random solver and a stochastic differential equation (SDE) with small mesh-dependent noise.

\subsection{Probabilistic time-integrators: general formulation}
\label{sec:ode:solver}
The integral form of Equation \eqref{eq:ODE} is 
\begin{equation}
\label{eq:IE}
u(t)=u_0+\int_{0}^{t}f\bigl(u(s)\bigr)ds. 
\end{equation}
The solutions on the mesh satisfy 
\begin{equation}
\label{eq:1}
u_{k+1}=u_k+\int_{t_k}^{t_{k+1}}f\bigl(u(s)\bigr)ds,
\end{equation}
and may be interpolated between mesh points by means of the expression 
\begin{equation}
\label{eq:2}
u(t)=u_k+\int_{t_k}^{t}f\bigl(u(s)\bigr)ds, \quad t \in [t_k,t_{k+1}).  
\end{equation}
We may then write
\begin{equation}
u(t)=u_{k}+\int_{t_k}^{t}g(s)ds, \quad t \in [t_k,t_{k+1}),
\label{eq:33}
\end{equation}
where $g(s)=f\bigl(u(s)\bigr)$ is an unknown function of time. In
the algorithmic setting we have approximate knowledge about $g(s)$
through an underlying numerical method. A variety of traditional 
numerical algorithms may be derived based on approximation of $g(s)$ by 
various simple deterministic functions $\gh(s)$. 
Perhaps the simplest 
such numerical method arises from invoking the Euler approximation that
\begin{equation}
\gh(s)=f(U_{k}), \quad s \in [t_k,t_{k+1}).
\label{eq:d}
\end{equation}
In particular, if we take $t=t_{k+1}$ and apply this method inductively
the corresponding numerical scheme arising from making such an approximation 
to $g(s)$ in \eqref{eq:33} is $U_{k+1}=U_{k}+hf(U_{k}).$ Now consider the more general one-step numerical method  $U_{k+1}=\psh(U_k).$
This may be derived by approximating $g(s)$ in \eqref{eq:33} by
\begin{equation}
\gh(s)=\frac{d}{d\tau}\Bigl(\pss(U_k)\Bigr)_{\tau=s-t_k}, \quad s \in [t_k,t_{k+1}).
\label{eq:dg}
\end{equation}
We note that all consistent (in the sense of numerical analysis)
one-step methods will satisfy
$$\frac{d}{d\tau}\Bigl(\pss(u)\Bigr)_{\tau=0}=f(u).$$

The approach based on the approximation \eqref{eq:dg}
leads to a deterministic numerical method which is
defined as a continuous function of time. Specifically we
have $U(s)=\Psi_{s-t_k}(U_k), \quad s \in [t_k,t_{k+1}).$ Consider again the Euler approximation, 
for which $\pss(U)=U+\tau f(U)$, and whose continuous time
interpolant is then given by
$U(s)=U_k+(s-t_k)f(U_k), \;\; s \in [t_k,t_{k+1}).$
Note that this produces a continuous function, namely an element
of $X_{0,T}$, when extended to $s \in [0,T].$
The preceding development of a numerical integrator does not
acknowledge the uncertainty that arises from lack of knowledge
about $g(s)$ in the interval $s \in [t_k,t_{k+1}).$ 
We propose to approximate $g$ 
stochastically in order to represent this uncertainty, taking
\[
\gh(s)= \frac{d}{d\tau}\Bigl(\pss(U_k)\Bigr)_{\tau=s-t_k}+\chi_k(s-t_k), \quad s \in [t_k,t_{k+1})
\]
where the $\{\chi_k\}$ form an i.i.d. sequence of Gaussian random
functions defined on $[0,h]$ with $\chi_k \sim N(0,\hC)$\footnote{We use $\chi_k \sim N(0,\hC)$ to denote a zero-mean Gaussian process defined on $[0,h]$ with a covariance kernel $\mathrm{cov}(\chi_k(t),\chi_k(s)) \triangleq C^h(t,s)$.}.

We will choose
$\hC$ to  shrink to zero with $h$ at a prescribed rate, and
also to ensure that $\chi_k \in X_{0,h}$ almost surely. 
The functions $\{\chi_k\}$ represent our uncertainty about the function $g$.  
The corresponding numerical scheme arising from such an approximation 
is given by 
\begin{equation} \label{eq:probg}
U_{k+1}=\psh(U_{k})+\xi_k(h)
\end{equation}
where the i.i.d. sequence of functions $\{\xi_k\}$ lies in $X_{0,h}$ and is
given by
\begin{equation}
\xi_k(t)=\int_{0}^{t}\chi_{k}(\tau)d\tau.
\end{equation}
Note that the numerical solution is now naturally defined between grid points,
via the expression
\begin{equation} \label{eq:probg2}
U(s)=\pst(U_{k})+\xi_k(s-t_k), \quad s \in [t_k,t_{k+1}).
\end{equation}

In the case of the Euler method, for example, we have
\begin{equation} \label{eq:prob}
U_{k+1}=U_{k}+hf(U_{k})+\xi_k(h)
\end{equation}
and, between grid points,
\begin{equation} \label{eq:prob2}
U(s)=U_{k}+(s-t_k)f(U_{k})+\xi_k(s-t_k), \quad s \in [t_k,t_{k+1}).
\end{equation}
This method is illustrated in Figure \ref{fig:EulerCartoon}.

\begin{figure}[htb]
\centering
\subfloat[Deterministic Euler.]{
\includegraphics[scale=.6]{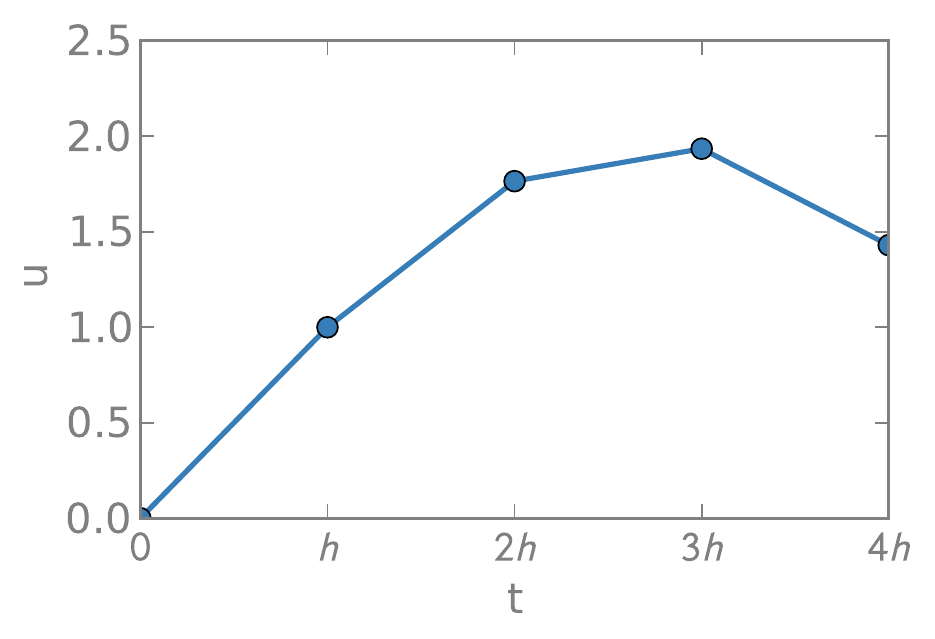}
}
\subfloat[Randomized Euler.]{
\includegraphics[scale=.6]{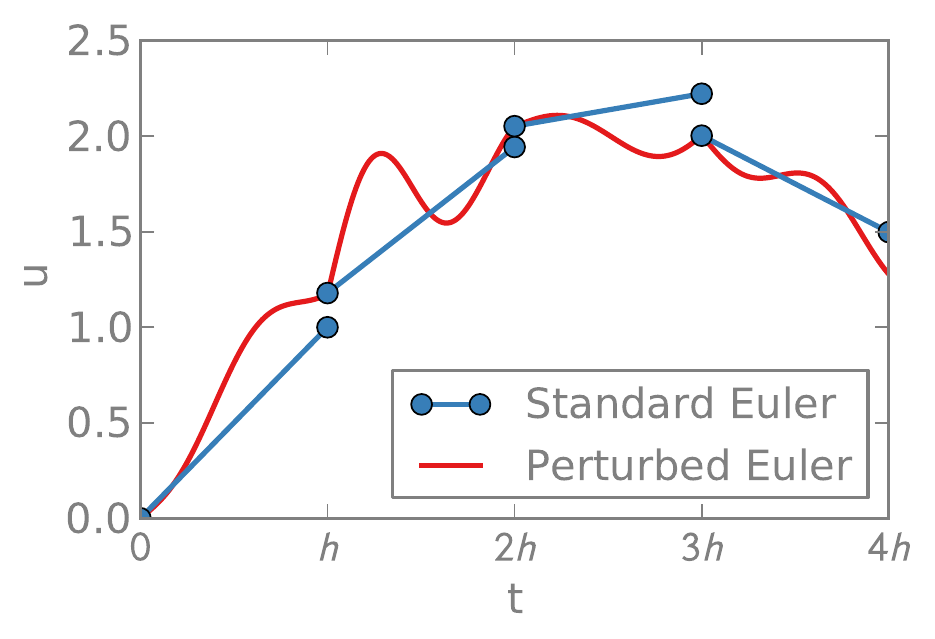}
}
\caption{An illustration of deterministic Euler steps and randomised variations. The random integrator in (b) outputs the path in red; we overlay the standard Euler step constructed at each step, before it is perturbed (blue).}
\label{fig:EulerCartoon}
\end{figure}

While we argue that the choice of modelling local uncertainty in the flow-map as a Gaussian process is natural and analytically favourable, it is not unique. 
It is possible to construct examples where the Gaussian assumption is invalid; for example, when a highly inadequate time step is used, a systemic bias may be introduced. However, in regimes where the underlying deterministic 
method performs well, the centred Gaussian assumption is a reasonable prior.

Finally, we comment on sampling off-grid. Imagine that we wish to
sample the continuous time solution at point $s \in (t_k,t_{k+1}).$
This might be performed in two different ways, given the solution generated on-grid, up to time $t_k$, producing
sample $U_k$. In the first method, we draw the sample
$U(s)=\Psi_{s-t_k}(U_k)+\xi_k(s-t_k).$
Having done so we must be careful to draw
$U_{k+1}=\Psi_{h}(U_k)+\xi_k(h) \mid \xi_k(s-t_k).$
This is not complicated, because all random variables are Gaussian given available information,
but needs to be respected via conditioning on the end points of the grid. An alternative is to draw the sample
$U_{k+1}=\Psi_{h}(U_k)+\xi_k(h)$
first, followed by the sample
$U(s)=\Psi_{s-t_k}(U_k)+\xi_k(s-t_k)\mid \xi_k(h),$
which the conditional Gaussian structure facilitates.

\subsection{Strong convergence result}
\label{sec:ode:convergence}

To prove the strong convergence of our probabilistic numerical solver, we first need two assumptions quantifying properties of the random noise and of the underlying deterministic integrator, respectively.
In what follows we use $\langle \cdot, \cdot \rangle$ and $|\cdot|$
to denote the Euclidean inner-product and norm on $\R^n$. We denote
the Frobenius norm on $\R^{n \times n}$ by $|\cdot|_{\rm F}$, 
and $\E$ denotes expectation with respect to the i.i.d.
sequence $\{\chi_k\}$. 
\begin{assumption}
\label{a:1}
Let $\xi_k(t):=\int_0^t \chi_k(s)ds$ with $\chi_k \sim N(0,\hC).$
Then there exists $K>0, p \ge 1$ such that, for all $t \in [0,h]$,
$\E |\xi_k(t) \xi_k(t)^T|_{\rm F}^2 \le Kt^{2p+1};$ in particular
$\E |\xi_k(t)|^2 \le Kt^{2p+1}.$ Furthermore we assume the existence
of matrix $Q$, independent of $h$, such that $\E [\xi_k(h) \xi_k(h)^T]=Qh^{2p+1}.$
\end{assumption}

Here, and in the sequel, $K$ is a constant independent of $h$, but
possibly changing from line to line.
The covariance kernel $\hC$ is not uniquely defined, but the uncertainty at the beginning of the step must be zero and increase with step length, scaling exactly as $h^{2p+1}$ at the end of the interval. Implementation requires selection of the constant matrix $Q$; in our numerical examples
we assume a scaled identity matrix, $Q=\sigma I$, and we discuss one possible strategy for choosing $\sigma$ in Section \ref{sec:numerics:calibration}. 

\begin{assumption}
\label{a:2}
The function $f$ and a sufficient number of its derivatives
are bounded uniformly in $\R^n$ in order to ensure that
$f$ is globally Lipschitz
and that the numerical flow-map $\Psi_h$ has uniform local truncation
error of order $q+1$:
$$\sup_{u \in \R^n} |\Psi_t(u)-\Phi_t(u)| \le Kt^{q+1}.$$
\end{assumption}

\begin{remark}
\label{rem:2.1}
We assume globally Lipschitz $f$, and bounded derivatives, in order
to highlight the key probabilistic ideas, whilst simplifying the numerical 
analysis. Future work will address the non-trivial issue
of extending of analyses to weaken these assumptions. In this
paper, we provide numerical results indicating that a weakening of
the assumptions is indeed possible. 
\end{remark}

\begin{theorem}
\label{t:1}
Under Assumptions \ref{a:1},\ref{a:2} it follows that there is $K>0$
such that 
$$\sup_{0 \le kh \le T}\E|u_k-U_k|^2 \le Kh^{2\min\{p,q\}}.$$
Furthermore
$$\sup_{0 \le t \le T}\E|u(t)-U(t)| \le Kh^{\min\{p,q\}}.$$
\end{theorem}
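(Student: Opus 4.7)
My plan is to prove the on-grid estimate first by a standard mean-square error analysis for stochastic one-step methods, then deduce the continuous-time estimate from it by a short interpolation argument.

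Let $e_k := u_k - U_k$. Writing $u_{k+1} = \Phi_h(u_k) = \Psi_h(u_k) + \bigl(\Phi_h(u_k)-\Psi_h(u_k)\bigr)$ and subtracting $U_{k+1} = \Psi_h(U_k) + \xi_k(h)$, I decompose
\[
e_{k+1} = \underbrace{\bigl(\Psi_h(u_k)-\Psi_h(U_k)\bigr)}_{A_k} + \underbrace{\bigl(\Phi_h(u_k)-\Psi_h(u_k)\bigr)}_{B_k} - \xi_k(h).
\]
The first piece I bound via the Lipschitz property of $\Psi_h$ (which follows from the global Lipschitz bound on $f$ in Assumption \ref{a:2}, giving $|\Psi_h(u)-\Psi_h(v)| \le (1+Ch)|u-v|$). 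The second piece is the local truncation error, of size $|B_k| \le Kh^{q+1}$ by Assumption \ref{a:2}. The noise satisfies $\mathbb{E}|\xi_k(h)|^2 \le Kh^{2p+1}$ by Assumption \ref{a:1}, and is independent of $A_k,B_k$ with mean zero.

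The main step is to square and take expectations. Since $\xi_k(h)$ has mean zero and is independent of $(u_k,U_k)$, the cross term with $\xi_k(h)$ vanishes and
\[
\mathbb{E}|e_{k+1}|^2 = \mathbb{E}|A_k+B_k|^2 + \mathbb{E}|\xi_k(h)|^2.
\]
Using the standard inequality $|a+b|^2 \le (1+\alpha h)|a|^2 + (1+(\alpha h)^{-1})|b|^2$ with $\alpha$ fixed but arbitrary, and inserting the bounds above, I obtain
\[
\mathbb{E}|e_{k+1}|^2 \le (1+C'h)\,\mathbb{E}|e_k|^2 + K\bigl(h^{2q+1} + h^{2p+1}\bigr) \le (1+C'h)\,\mathbb{E}|e_k|^2 + K h^{2\min\{p,q\}+1},
\]
where the $h^{2q+1}$ arises from combining $h^{-1}\cdot h^{2q+2}$. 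Discrete Gronwall with $e_0=0$ and $kh \le T$ then yields $\mathbb{E}|e_k|^2 \le Kh^{2\min\{p,q\}}$, which is the first claim.

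For the continuous-time estimate, for $s\in[t_k,t_{k+1})$ I again decompose
\[
u(s)-U(s) = \bigl(\Psi_{s-t_k}(u_k)-\Psi_{s-t_k}(U_k)\bigr) + \bigl(\Phi_{s-t_k}(u_k)-\Psi_{s-t_k}(u_k)\bigr) - \xi_k(s-t_k),
\]
take absolute values, and apply the triangle inequality. The first term is bounded by $(1+Ch)|e_k|$, the second by $K h^{q+1}$, and the third in $L^1$ by Jensen's inequality and Assumption \ref{a:1} as $\mathbb{E}|\xi_k(s-t_k)|\le (Kh^{2p+1})^{1/2} = K'h^{p+1/2}$. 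Using Cauchy--Schwarz on the first term together with the on-grid bound gives $\mathbb{E}|e_k| \le Kh^{\min\{p,q\}}$, and the other two contributions are of higher order in $h$; this yields the second claim.

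The step I expect to require the most care is matching the powers of $h$ correctly when applying the $(1+\alpha h)$--splitting: the cross-term contribution from $B_k$ gives $(\alpha h)^{-1}|B_k|^2 \sim h^{2q+1}$, not $h^{2q+2}$, so one loses a factor of $h$ there, and this is exactly what forces the final rate to be $h^{\min\{p,q\}}$ rather than any half-power improvement. Everything else is a routine application of discrete Gronwall and independence.
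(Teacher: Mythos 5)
Your proof is correct and follows essentially the same route as the paper's: the same one-step error recursion with the noise cross-term killed by independence, a Lipschitz-plus-truncation-error bound absorbed via a weighted Young/Cauchy--Schwarz step into a $(1+Ch)$ contraction factor, discrete Gronwall for the on-grid bound, and the same interpolation argument (Lipschitz flow, local truncation error, Jensen on the noise) off the grid. The only differences are cosmetic: you anchor the local truncation error at $u_k$ and use Lipschitzness of $\Psi_h$, whereas the paper anchors it at $U_k$ and uses Lipschitzness of $\Phi_h$.
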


This theorem implies that every probabilistic solution is a good approximation of the exact solution in both a discrete and continuous sense. Choosing $p \ge q$ is natural if we want to preserve the
strong order of accuracy of the underlying deterministic
integrator; we proceed with the choice $p=q$, introducing
the maximum amount of noise consistent
with this constraint.

\begin{proof}
We first derive the convergence result on the grid, and then in continuous time.
From \eqref{eq:probg} we have 
\begin{equation} \label{eq:probga}
U_{k+1}=\psh(U_{k})+\xi_k(h)
\end{equation}
whilst we know that
\begin{equation} \label{eq:probgb}
u_{k+1}=\Phi_h(u_{k}).
\end{equation}
Define the truncation error
$\epsilon_k=\psh(U_{k})-\Phi_h(U_k)$
and note that
\begin{equation} \label{eq:probgc}
U_{k+1}=\Phi_h(U_k)+\epsilon_k+\xi_k(h).
\end{equation}
Subtracting equation \eqref{eq:probgc} from \eqref{eq:probgb}
and defining $e_k=u_k-U_k$, we get
$$e_{k+1} = \Phi_h(u_{k})-\Phi_h(u_{k}-e_{k})-\epsilon_k-\xi_k(h).$$
Taking the Euclidean norm and taking expectations gives, using
Assumption \ref{a:1} and the independence of the $\xi_k$, 
$$\E|e_{k+1}|^2 =\E\Bigl|\Phi_h(u_{k})-\Phi_h(u_{k}-e_{k})-\epsilon_k\Bigr|^2+{\mathcal O}(h^{2p+1})$$
where the constant in the ${\mathcal O}(h^{2p+1})$ term is uniform in 
$k: 0 \le kh \le T.$ Assumption \ref{a:2} implies that 
$\epsilon_k={\mathcal O}(h^{q+1}),$ again uniformly in $k: 0 \le kh \le T.$
Noting that $\Phi_h$ is globally Lipschitz with constant bounded
by $1+Lh$ under Assumption \ref{a:2}, we then obtain
\begin{eqnarray*}
\E|e_{k+1}|^2 &\le& (1+Lh)^2 \E|e_k|^2+ \E\Bigl|\bigl\langle h^{\frac12}\bigl(\Phi_h(u_{k})-\Phi_h(u_{k}-e_{k})\bigr),h^{-\frac12}\epsilon_k
\bigr\rangle \Bigr|\\&+&{\mathcal O}(h^{2q+2})+{\mathcal O}(h^{2p+1}).
\end{eqnarray*}
Using Cauchy--Schwarz on the inner-product, and the fact that $\Phi_h$
is Lipschitz with constant bounded independently of $h$, we get
$$\E|e_{k+1}|^2 \le \bigl(1+{\mathcal O}(h)\bigr)\E|e_k|^2+
{\mathcal O}(h^{2q+1})+{\mathcal O}(h^{2p+1}).$$
Application of the Gronwall inequality gives the desired result.

Now we turn to continuous time. We note that, for $s \in [t_k,t_{k+1})$, 
\begin{align*}
U(s)&=\Psi_{s-t_k}(U_k)+\xi_k(s-t_k),\\
u(s)&=\Phi_{s-t_k}(u_k).
\end{align*}
Let $\cF_t$ denote the $\sigma$-algebra of events generated by the
$\{\xi_k\}$ up to time $t$.
Subtracting we obtain, using Assumptions \ref{a:1} and \ref{a:2} 
and the fact that $\Phi_{s-t_k}$ has Lipschitz constant of the form 
$1+{\mathcal O}(h)$,
\begin{align*}
\E\bigl(|U(s)-u(s)|\big| \cF_{t_k}\bigr) &\le  |\Phi_{s-t_k}(U_k)-\Phi_{s-t_k}(u_k)|+ |\Psi_{s-t_k}(U_k)-\Phi_{s-t_k}(U_k)|\\&\hspace{1.78in}+\E \bigl(|\xi_k(s-t_k)|\big|\cF_{t_k}\bigr)\\
&\le (1+Lh)|e_k|+{\mathcal O}(h^{q+1})+\E |\xi_k(s-t_k)|\\
&\le (1+Lh)|e_k|+{\mathcal O}(h^{q+1})+\bigl(\E|\xi_k(s-t_k)|^2\bigr)^{\frac12}\\
&\le (1+Lh)|e_k|+{\mathcal O}(h^{q+1})
+{\mathcal O}(h^{p+\frac12}).
\end{align*}
Now taking expectations we obtain
$$\E|U(s)-u(s)| \le (1+Lh)\bigl(\E|e_k|^2\bigr)^{\frac12}+{\mathcal O}(h^{q+1})
+{\mathcal O}(h^{p+\frac12}).$$
Using the on-grid error bound gives the desired result, after noting
that the constants appearing are uniform in $0 \le kh \le T.$
\end{proof}

\subsection{Examples of probabilistic time-integrators}
\label{sec:ode:examples}

The canonical illustration of a probabilistic time-integrator
is the probabilistic Euler method already described. 
In this section we describe two methods which showcase the
generality of the approach. 


The first is the classical {\em Runge--Kutta method} which
defines a one-step numerical integrator as follows: 
$$\psh(u)=u+\frac{h}{6}\bigl(k_1(u)+2k_2(u,h)+2k_3(u,h)+k_4(u,h)\bigr)$$
where 
\begin{subequations}
\begin{align*}
k_1(u)&=f(u),\quad
k_2(u,h)=f\bigl(u+\frac12h k_1(u)\bigr)\\
k_3(u,h)&=f\bigl(u+\frac12h k_2(u)\bigr),\quad
k_4(u,h)=f\bigl(u+h k_3(u)\bigr).
\end{align*}
\end{subequations}
The method has local truncation error in the form of Assumption \ref{a:2}
with $q=4.$ It may be used as the basis of
a probabilistic numerical method \eqref{eq:probg2}, and hence
\eqref{eq:probg} at the grid-points. 
Thus, provided that we choose to perturb this integrator with a random process
$\chi_k$ satisfying Assumption \ref{a:1} with $p \ge 4$, Theorem
\ref{t:1} shows that the error between the probabilistic
integrator based on the classical Runge--Kutta method is, in the mean square 
sense, of the same order of accuracy as the deterministic classical Runge--Kutta
integrator.

The second is an {\em integrated Ornstein--Uhlenbeck process},
derived as follows. Define,
on the interval $s \in [t_k,t_{k+1})$, the pair of equations
\begin{subequations}
\label{eq:OU2}
\begin{align}
dU&=Vdt, \quad U(t_k)=U_k,\\
dV&=-\Lambda Vdt+\sqrt{2\Sigma} \, dW, \quad V(t_k)=f(U_k).
\end{align}
\end{subequations}
Here $W$ is a standard Brownian motion and
$\Lambda$ and $\Sigma$ are invertible matrices, possibly depending on $h$.
The approximating function $\gh(s)$ is thus defined by $V(s)$,
an Ornstein--Uhlenbeck process. 

Integrating (\ref{eq:OU2}b) we obtain
\begin{equation}
\label{eq:6}
V(s)=\exp\bigl(-\Lambda(s-t_k)\bigr)f(U_{k})+\chi_k(s-t_k), 
\end{equation}
where $\quad s \in [t_k,t_{k+1})$ and the $\{\chi_k\}$ form an i.i.d. sequence of Gaussian random
functions defined on $[0,h]$ with 
$$\chi_k(s)=\sqrt{2\Sigma}\int_0^{s}\exp\bigl(\Lambda(\tau-s)\bigr) \, dW(\tau).$$
Note that the $h$-dependence of $\hC$ comes through
the time-interval on which $\chi_k$ is defined, and through $\Lambda$
and $\Sigma$.

Integrating (\ref{eq:OU2}a), using \eqref{eq:6}, we obtain
\begin{equation}
\label{eq:probgex}
U(s)=U_k+\Lambda^{-1}\Bigl(I-\exp\bigl(-\Lambda (s-t_k)\bigr)\Bigr)f(U_{k})+\xi_k(s-t_k),
\end{equation}
where $\quad s \in [t_k,t_{k+1}]$, and, for $t \in [0,h]$,
\begin{equation}
\label{eq:chi2}
\xi_k(t)=\int_{0}^{t}\chi_{k}(\tau)d\tau.
\end{equation}
The numerical method \eqref{eq:probgex} 
may be written in the form \eqref{eq:probg2}, and hence
\eqref{eq:probg} at the grid-points, with the definition
$$\psh(u)=u+\Lambda^{-1}\Bigl(I-\exp\bigl(-\Lambda h\bigr)\Bigr)f(u).$$
This integrator is first order accurate and satisfies
Assumptions \ref{a:2} with $p=1$. Choosing to scale $\Sigma$ with
$h$ so that $q \ge 1$ in Assumptions \ref{a:1} leads to convergence
of the numerical method with order $1$.

Had we carried out the above analysis in the case $\Lambda=0$ we would
have obtained the probabilistic Euler method \eqref{eq:prob2},
and hence \eqref{eq:prob} at grid points, used as our canonical
example in the earlier developments.

\subsection{Backward error analysis}

The idea of backward error analysis is to identify a modified equation 
which is solved by the numerical method either exactly, or at least 
to a higher degree of accuracy than 
the numerical method solves the original equation. 
In the context of differential equations 
this modified equation will involve the step-size $h$. In the setting of 
ordinary differential equations, and the random integrators introduced in this 
section, we will show that the modified equation is a stochastic differential
equation (SDE) in which only the matrix $Q$ from Assumption \ref{a:1} enters;
the details of the random processes used in our construction do not enter
the modified equation. This universality property underpins the methodology
we introduce as it shows that many different choices of random processes all
lead to the same effective behaviour of the numerical method.

We introduce the operators $\oL$ and $\hL$ defined so that, for all
$\phi \in C^{\infty}(\bbR^n,\bbR)$,
\begin{equation}
\label{eq:q1}
\phi\bigl(\Phi_h(u)\bigr)=\bigl(e^{h\oL}\phi\bigr)(u), \quad
\IE \phi\bigl(U_1|U_0=u\bigr)=\bigl(e^{h\hL}\phi\bigr)(u).
\end{equation}
Thus $\oL:=f \cdot \nabla$ and $e^{h\hL}$ is the kernel for
the Markov chain generated by the probabilistic integrator \eqref{eq:NM}. 
In fact we never need to work with $\hL$ itself in what follows, only
with $e^{h\hL}$, so that questions involving the operator logarithm do not need
to be discussed.

We now introduce a modified ODE and a modified SDE which will be needed
in the analysis that follows. The modified ODE is
\begin{equation}
\label{eq:q2}
\frac{d\hat{u}}{dt}=f^h(\hat{u})
\end{equation}
whilst the modified SDE has the form
\begin{equation}
\label{eq:q3}
d\tilde{u}=f^h(\tilde{u})dt+\sqrt{h^{2p} Q} \, dW.
\end{equation}
The precise choice of $f^h$ is detailed below.
Letting $\IE$ denote expectation with respect to $W$, we
introduce the operators $\aL$ and $\tL$ so that, for all
$\phi \in C^{\infty}(\bbR^n,\bbR)$,
\begin{equation}
\label{eq:q4}
\phi\bigl(\hat{u}(h)|\hat{u}(0)=u\bigr)=\bigl(e^{h\aL}\phi\bigr)(u), \quad
\IE \phi\bigl(\tilde{u}(h)|\tilde{u}(0)=0\bigr)=\bigl(e^{h\tL}\phi\bigr)(u).
\end{equation}
Thus 
\begin{equation}
\label{eq:q5}
\aL:=f^h \cdot \nabla, \quad 
\tL=f^h \cdot \nabla+\frac12h^{2p}Q:\nabla \nabla.
\end{equation}
where $:$ denotes the inner product on
$\mathbb{R}^{n \times n}$ which induces the Frobenius norm, that is, $A:B= \text{trace}(A^{T} B)$.
 
The fact that the deterministic numerical integrator has uniform local 
truncation error of order $q+1$ (Assumption \ref{a:2}) implies that, since  $\phi \in C^{\infty}$, 
\begin{equation} \label{eq:local_error}
e^{h \oL}\phi(u)-\phi(\Psi_{h}(u))=\mathcal{O}(h^{q+1}).
\end{equation}
The theory of modified equations for classical one-step numerical
integration schemes for ODEs \cite{hairer1993solving} establishes that it is possible to 
find $f^h$ in the form
\begin{equation}
\label{eq:q6}
f^h:=f+\sum_{i=q}^{q+l}h^{i}f_{i}
\end{equation} 
\begin{equation} \label{eq:local_error_modified}
e^{h \aL}\phi(u)-\phi(\Psi_{h}(u))=\mathcal{O}(h^{q+2+l}).
\end{equation}
We work with this choice of $f^h$ in what follows.

Now for our stochastic numerical method we have 
\begin{eqnarray*}
\phi(U_{k+1}) &=& \phi(\Psi_{h}(U_{k}))+\xi_{k}(h) \cdot \nabla \phi(\Psi_{h}(U_{k}))\\&+&\frac{1}{2} \xi_{k}(h)\xi^{T}_{k}(h) : \nabla \nabla \phi(\Psi_{h}(U_{k}))+ {\mathcal O}(|\xi_k(h)|^3).
\end{eqnarray*}
Furthermore the last term has mean of size 
${\mathcal O}(|\xi_k(h)|^4)$. From Assumption~\ref{a:1} we know 
that $\E \left( \xi_{k}(h)\xi^{T}_{k}(h) \right) = Q h^{2p+1}.$
Thus
\begin{equation}
\label{eq:q7}
e^{h\hL}\phi(u) - \phi\bigl(\Psi_{h}(u)\bigr)= \frac{1}{2} h^{2p+1} Q  : \nabla \nabla \phi\bigl(\Psi_{h}(u)\bigr)+\mathcal{O}(h^{4p+2}).
\end{equation}
From this it follows that
\begin{equation}
\label{eq:one_step1}
e^{h\hL}\phi(u) - \phi\bigl(\Psi_{h}(u)\bigr)= \frac{1}{2} h^{2p+1} Q  : \nabla \nabla \phi(u)+\mathcal{O}(h^{2p+2}).
\end{equation}
Finally we note that \eqref{eq:q5} implies that
\begin{align*}
e^{h\tL}\phi(u)-e^{h\aL}\phi(u)&= e^{h\aL}\bigl(e^{\frac12 h^{2p+1}Q:\nabla\nabla}-I\bigr)\phi(u)\\
&=e^{h\aL}\Bigl(\frac12 h^{2p+1}Q:\nabla\nabla \phi(u)+\mathcal{O}(h^{4p+2})\Bigr)\\
&=\bigl(I+\mathcal{O}(h)\bigr)\Bigl(\frac12 h^{2p+1}Q:\nabla\nabla \phi(u)+\mathcal{O}(h^{4p+2})\Bigr).
\end{align*}
Thus we have
\begin{equation}
\label{eq:q8}
e^{h\tL}\phi(u)-e^{h\aL}\phi(u)=\frac12 h^{2p+1}Q:\nabla\nabla \phi(u)+\mathcal{O}(h^{2p+2}).
\end{equation}

Now using  \eqref{eq:local_error_modified}, \eqref{eq:one_step1} and \eqref{eq:q8} we obtain 
\begin{equation}
\label{eq:q10}
e^{h\tL}\phi(u)-e^{h\hL}\phi(u)=\mathcal{O}(h^{2p+2})+\mathcal{O}(h^{q+2+l}).
\end{equation}
Balancing these terms, in what follows we make the choice $l=2p-q$. 
If $l<0$ we adopt the convention that the drift $f^h$ 
is simply $f.$ With this choice of $q$ we obtain 
\begin{equation}
\label{eq:q9}
e^{h\tL}\phi(u)-e^{h\hL}\phi(u)=\mathcal{O}(h^{2p+2}).
\end{equation}

This demonstrates that the error between the Markov kernel of
one-step of the SDE \eqref{eq:q3} and the Markov kernel
of the numerical method \eqref{eq:NM} is of order $\mathcal{O}(h^{2p+2})$. Some
straightforward stability considerations show that the weak
error over an $\mathcal{O}(1)$ time-interval is $\mathcal{O}(h^{2p+1})$.
We make assumptions giving this stability and then state
a theorem comparing the weak error with respect to the modified
Equation \eqref{eq:q3}, and the original Equation \eqref{eq:ODE}.

\begin{assumption} 
\label{a:3}
The function $f$ is in $C^{\infty}$ and all its derivatives are uniformly
bounded on $\R^n$. Furthermore $f$ is such that the operators $e^{h\oL}$
and $e^{h\hL}$
satisfy, for all $\psi \in C^{\infty}(\R^n,\R)$ and some $L>0$,
\begin{align*}
\sup_{u \in \R^n}|e^{h\oL}\psi(u)| &\le (1+Lh)\sup_{u \in \R^n}|\psi(u)|,\\
\sup_{u \in \R^n}|e^{h\hL}\psi(u)| &\le (1+Lh)\sup_{u \in \R^n}|\psi(u)|.
\end{align*}
\end{assumption}

\begin{remark}
\label{rem:2.3}
If $p=q$ in what follows (our recommended choice) then the weak order of
the method coincides with the strong order; however, measured
relative to the modified equation, the weak order is then
one plus twice the strong order. In this case, the second part of Theorem \ref{t:1} gives us the first weak order result in Theorem \ref{thm:weak}. 
Additionally, Assumption \ref{a:3} is stronger than we need, but
allows us to highlight probabilistic ideas whilst keeping overly 
technical aspects of the numerical analysis to a minimum. 
More sophisticated, but structurally similar, analysis 
would be required for weaker assumptions on $f$. 
Similar considerations apply to the assumptions on $\phi$.
\end{remark}

\begin{theorem} \label{thm:weak}
Consider the numerical method \eqref{eq:probg} and assume that Assumptions  \ref{a:1} and \ref{a:3} are satisfied. 
Then, for $\phi \in C^{\infty}$ function with all derivatives
bounded uniformly on $\R^n$, we have that 
\[
|\phi(u(T))-\E\bigl(\phi(U_{k})\bigr) | \leq K h^{\min\{2p,q\}}, \quad kh=T, 
\]
and
\[
|\IE\bigl(\phi(\tilde{u}(T))\bigr)-\E\bigl(\phi(U_{k})\bigr) | \leq K h^{2p+1}, \quad kh=T, 
\]
where $u$ and $\tilde{u}$ solve \eqref{eq:ODE} and 
\eqref{eq:q3} respectively.
\end{theorem}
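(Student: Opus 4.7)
The plan is to apply the standard weak-error telescope for Markov schemes approximating continuous trajectories, using as ingredients the one-step estimates \eqref{eq:local_error}, \eqref{eq:one_step1} and \eqref{eq:q9} already established in the excerpt, together with the sup-norm stability supplied by Assumption \ref{a:3}.

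For the first inequality I would introduce the test function $v(t,x):=e^{(T-t)\oL}\phi(x)=\phi\bigl(\Phi_{T-t}(x)\bigr)$, so that $v(T,\cdot)=\phi$ and $v(0,u_0)=\phi(u(T))$. Since $v(t_k,U_k)=\bigl(e^{h\oL}v(t_{k+1},\cdot)\bigr)(U_k)$ and $\E\bigl[v(t_{k+1},U_{k+1})\mid U_k\bigr]=\bigl(e^{h\hL}v(t_{k+1},\cdot)\bigr)(U_k)$, telescoping yields
\begin{align*}
\E\phi(U_K)-\phi(u(T))
&=\sum_{k=0}^{K-1}\E\bigl[v(t_{k+1},U_{k+1})-v(t_k,U_k)\bigr]\\
&=\sum_{k=0}^{K-1}\E\Bigl[\bigl((e^{h\hL}-e^{h\oL})v(t_{k+1},\cdot)\bigr)(U_k)\Bigr].
\end{align*}
Decomposing $(e^{h\hL}-e^{h\oL})\psi = \bigl(e^{h\hL}\psi-\psi\circ\Psi_h\bigr)-\bigl(e^{h\oL}\psi-\psi\circ\Psi_h\bigr)$ and applying \eqref{eq:local_error} and \eqref{eq:one_step1} to $\psi=v(t_{k+1},\cdot)$, each summand is $\mathcal{O}(h^{\min\{2p+1,q+1\}})$ uniformly in $k$, so summing the $K=T/h$ terms yields the claimed $\mathcal{O}(h^{\min\{2p,q\}})$ bound.

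The second inequality is handled by the same telescope with the SDE-adapted test function $v(t,x):=e^{(T-t)\tL}\phi(x)$, which satisfies $v(0,u_0)=\IE\phi(\tilde{u}(T))$. The analogous identity
\[
\E\phi(U_K)-\IE\phi(\tilde{u}(T))=\sum_{k=0}^{K-1}\E\Bigl[\bigl((e^{h\hL}-e^{h\tL})v(t_{k+1},\cdot)\bigr)(U_k)\Bigr]
\]
has each summand bounded directly by $\mathcal{O}(h^{2p+2})$ via \eqref{eq:q9}, and summation over $K=T/h$ steps produces the stated $\mathcal{O}(h^{2p+1})$ weak rate.

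The main obstacle is that the implicit constants in the one-step estimates depend on sup-norms of several derivatives of $v(t_{k+1},\cdot)$, and these must be controlled uniformly in $k$ in order to justify summing $T/h$ copies. For $v(t,\cdot)=\phi\circ\Phi_{T-t}$ this reduces to uniform-in-$t$ bounds on the derivatives of the flow $\Phi_{T-t}$, which follow from the assumed boundedness of all derivatives of $f$ via the variational equations and the Gronwall inequality; the zeroth-order sup-norm bound is immediate from Assumption \ref{a:3}. The SDE case is analogous, using standard regularity of the semigroup $e^{t\tL}$ given that $f^h$ has uniformly bounded derivatives and $Q$ is constant. Once this regularity is in place, the estimates follow by substituting the already-proved one-step bounds into the telescope, and Remark \ref{rem:2.3} explains why Assumption \ref{a:3} is slightly stronger than strictly needed. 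As a consistency check, the first bound at $p\ge q$ matches what one obtains from Theorem \ref{t:1} applied to Lipschitz $\phi$.
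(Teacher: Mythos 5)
Your proof is correct and recovers both stated rates, but it is organised differently from the paper's. You run the Talay--Tubaro telescope: transport the test function backwards by the exact (resp.\ modified) semigroup, $v(t,\cdot)=e^{(T-t)\oL}\phi$ or $v(t,\cdot)=e^{(T-t)\tL}\phi$, and bound each increment $\E\bigl[\bigl((e^{h\hL}-e^{h\oL})v(t_{k+1},\cdot)\bigr)(U_k)\bigr]$ directly by the one-step estimates \eqref{eq:local_error}, \eqref{eq:one_step1} and \eqref{eq:q9}, then sum the $T/h$ terms. The paper instead runs a forward recursion: with $W_k=e^{kh\hL}\phi$ and $w_k=e^{kh\tL}\phi$ it writes $W_{k+1}-w_{k+1}=e^{h\hL}(W_k-w_k)+(e^{h\hL}-e^{h\tL})w_k$, deduces $\delta_{k+1}\le(1+Lh)\delta_k+\mathcal{O}(h^{2p+2})$ for $\delta_k=\sup_u|W_k-w_k|$, and closes with discrete Gronwall; the first bound is treated analogously using the stability of $e^{h\oL}$. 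The two routes consume the same one-step consistency input but different stability input: the paper leans on the sup-norm bounds for $e^{h\hL}$ and $e^{h\oL}$ in Assumption \ref{a:3} and implicitly needs the derivatives of $w_k=e^{kh\tL}\phi$ controlled uniformly in $k$ so that \eqref{eq:q9} may be applied to $w_k$; your telescope dispenses with the numerical-kernel stability but needs exactly the analogous uniform derivative bounds on $e^{(T-t)\oL}\phi$ and $e^{(T-t)\tL}\phi$, which you correctly single out as the real technical burden and sketch via the variational equations and Gronwall. Neither version is more rigorous on this point --- the paper absorbs it into Assumption \ref{a:3} and Remark \ref{rem:2.3} --- so your proposal is a legitimate alternative of essentially equal strength, with the minor advantage that the $T/h$ summands are estimated independently rather than through an error-propagation recursion, which makes the uniformity requirement explicit.
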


\begin{proof}
We prove the second bound first. Let $w_k=\IE\bigl(\phi(\tilde{u}(t_k))|
\tilde{u}(0)=u\bigr)$ and $W_k=\E\bigl(\phi(U_{k})|U_0=u).$
Then let $\delta_k=\sup_{u \in \R^n}|W_k-w_k|.$ It follows from the Markov property that
\begin{align*}
W_{k+1}-w_{k+1} & =e^{h\hL}W_k-e^{h\tL}w_k\\
&=e^{h\hL}W_k-e^{h\hL}w_k+\bigl(e^{h\hL}w_k-e^{h\tL}w_k).
\end{align*}
Using \eqref{eq:q9} and Assumption \ref{a:3} we obtain
$$\delta_{k+1} \le (1+Lh)\delta_k+\mathcal{O}(h^{2p+2}).$$
Iterating and employing the Gronwall inequality gives the
second error bound.

Now we turn to the first error bound, comparing with the solution $u$
of the original equation \eqref{eq:ODE}. From \eqref{eq:one_step1} and
then \eqref{eq:local_error} we see that 
\begin{align*} 
&e^{h \hL}\phi(u)-\phi(\Psi_{h}(u))=\mathcal{O}(h^{2p+1}),\\
&e^{h\oL}\phi(u)-e^{h\hL}\phi(u)=\mathcal{O}(h^{\min\{2p+1,q+1\}}).
\end{align*}
This gives the first weak error estimate, 
after using the stability estimate on $e^{h\oL}$ 
from Assumption \ref{a:3}.
\end{proof}

\begin{example}
Consider the probabilistic integrator derived from the Euler method in 
dimension $n=1$. We thus have  $q=1$, and we hence set $p=1$.
The results in \cite{hairer1} allow us to calculate $f^h$ with $l=1$. 
The preceding theory then leads to strong order of convergence $1$, 
measured relative to the true ODE \eqref{eq:ODE}, and weak order $3$ relative 
to the SDE
\begin{equation*} \label{eq:mod_Euler}
d\hat{u}=\left(f(\hat{u})-\frac{h}{2}f'(\hat{u})f(\hat{u})+\frac{h^{2}}{12}\left(f''(\hat{u})f^{2}(\hat{u})+4 (f'(\hat{u}))^{2}f(\hat{u})\right)\right)dt+\sqrt{C}hdW.
\end{equation*}
\end{example}

With these results now available, the following section provides an empirical study of our probabilistic integrators.


\section{Statistical Inference and Numerics}
\label{sec:numerics}
This section explores applications of the randomised ODE solvers developed in Section \ref{sec:ode}. First, we study forward uncertainty propagation and propose a method for calibrating the problem-dependent scaling constant, $\sigma$, against classic error indicators.\footnote{Recall 
that throughout we assume that, within the context of Assumption \ref{a:1}, 
$Q=\sigma I$. More generally it is possible to calibrate an arbitrary
positive semi-definite $Q$.} Secondly, we employ the calibrated 
measure within
Bayesian inference problems, demonstrating that the resulting posteriors exhibit more consistent behaviour over varying step-sizes than with naive use of a deterministic integrators.
Throughout this section we use the FitzHugh--Nagumo model to illustrate ideas \cite{ramsay2007parameter}.
This is a two-state non-linear oscillator, with states $(V,R)$ and parameters $(a,b,c)$, governed by the equations
\begin{equation}
\frac{dV}{dt} = c \left(V - \frac{V^3}{3} +R \right),\;\;\;
\frac{dR}{dt}  = -\frac{1}{c}\left(V-a+bR \right).
\label{eq:FH}
\end{equation}
This particular example does not satisfy the stringent Assumptions \ref{a:2}
and \ref{a:3} and the numerical results shown demonstrate that, as indicated
in Remarks \ref{rem:2.1} \and \ref{rem:2.3}, our theory will extend to
weaker assumptions on $f$, something we will address in future work.

\begin{figure}[htbp]
\centering
\includegraphics[scale=.7]{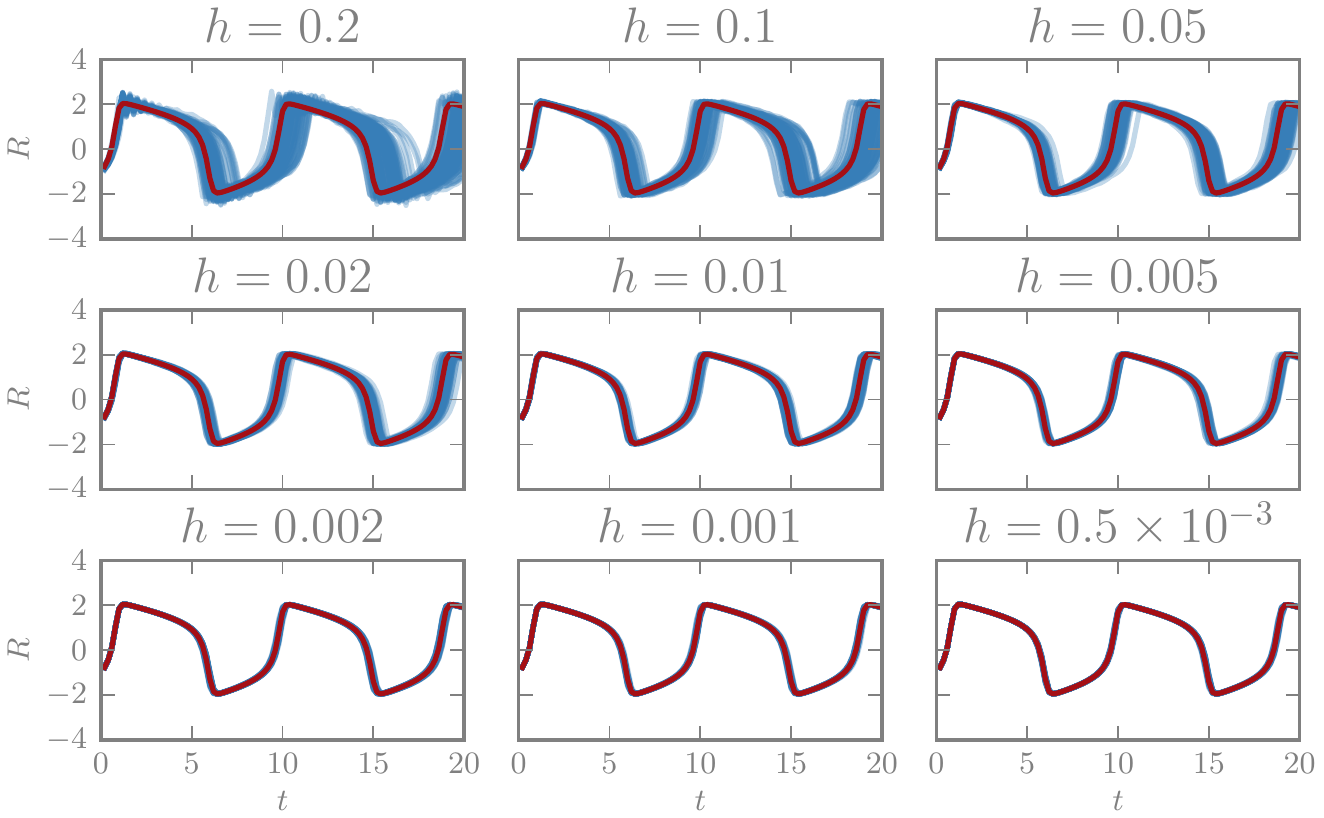}

\caption{The true trajectory of the $V$ species of the FitzHugh--Nagumo model (red) and one hundred realisations from a probabilistic Euler ODE solver with various step-sizes and noise scale $\sigma=.1$ (blue).}
\label{fig:fitz_steps}
\end{figure}

\begin{figure}[htb]
\centering
\includegraphics[scale=.7]{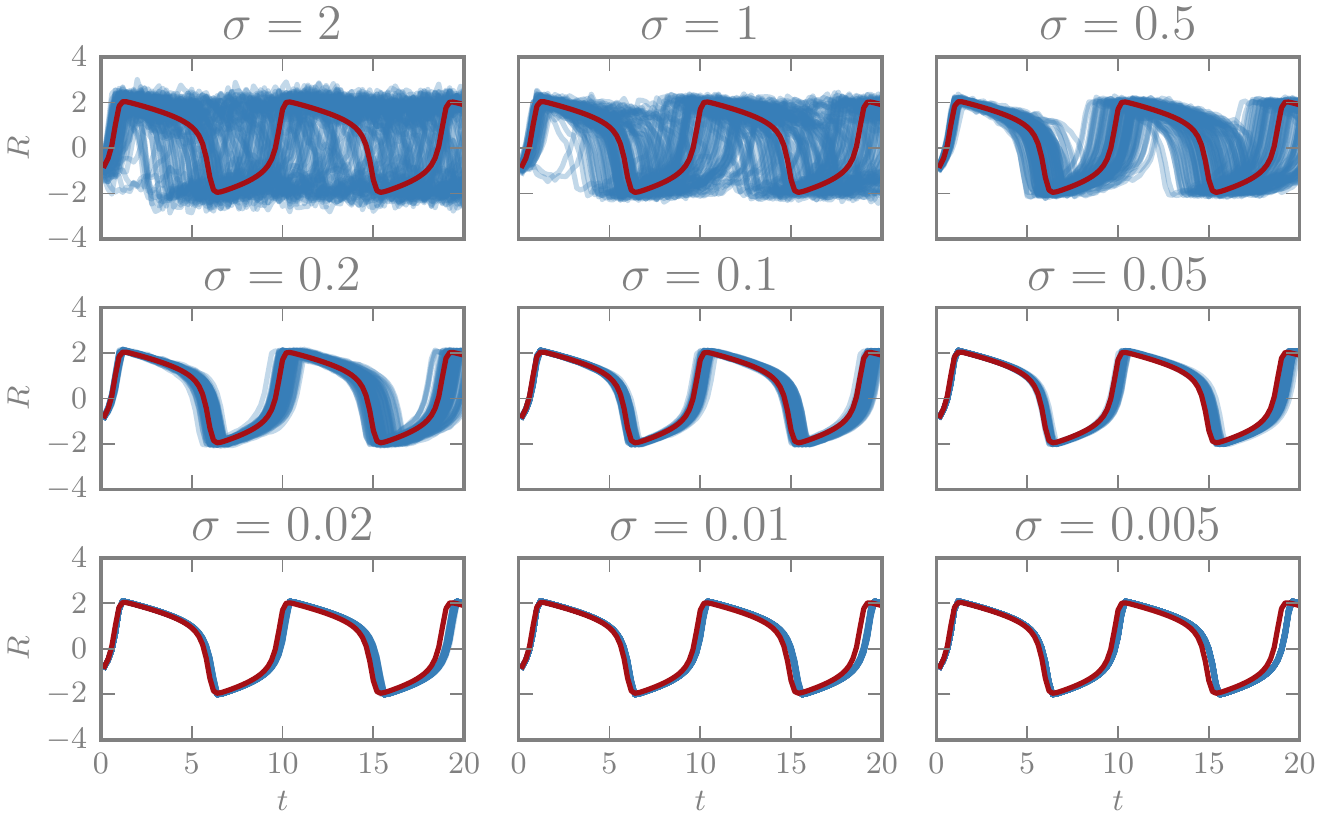}

\caption{The true trajectory of the $V$ species of the FitzHugh--Nagumo model (red) and one hundred random draws from a probabilistic Euler ODE solver with step-size $h=.1$ and various noise scalings (blue). Observe that for small $\sigma$ the draws concentrate away from the true solution. The value of $\sigma = 0.2$ appears appropriate, since the true value lies comfortably within the solution draws. }
\label{fig:fitz_noise}
\end{figure}

\subsection{Calibrating forward uncertainty propagation}
\label{sec:numerics:calibration}
Consider Equation \eqref{eq:FH} with fixed initial conditions $V(0)=-1, R(0)=1$, and parameter values $(.2,.2,3)$. Figure \ref{fig:fitz_steps} shows draws of the $V$ species trajectories from the measure associated with the probabilistic Euler solver with $p=q=1$,  for various values of the step-size and fixed $\sigma=0.1$. The contraction of the measure towards the reference solution, as $h$ shrinks, is clearly evident. Furthermore, the uncertainty exhibits interesting, non-Gaussian structure where trajectories disagree whether to begin the steep phase of the oscillation.

Although the rate of contraction is governed by the underlying deterministic method, the scale parameter, $\sigma$, completely controls the apparent uncertainty in the solver. To illustrate this, Figure \ref{fig:fitz_noise} fixes the step-size, $h=0.1$, and shows that rescaling the noise can create any apparent level of certainty desired, including high confidence in an incorrect solution. This tuning problem exists in general, since $\sigma$ is problem dependent and cannot obviously be computed analytically.

Therefore, we propose to calibrate $\sigma$ to replicate the amount of error suggested by classical error indicators. In the following discussion, we often explicitly denote the dependence on $h$ and $\sigma$ with superscripts, hence the probabilistic solver is $\Uhs$ and the corresponding deterministic solver is $\Uhz$.  Define the true error as $e(t) = u(t) - \Uhz(t)$. Then we assume there is some computable error indicator $E(t) \approx e(t)$, defining $E_k = E(t_k)$. The simplest error indicators might compare differing step-sizes, $E(t) = \Uhz(t) - U^{2h, 0}(t)$, or differing order methods, as in a Runge--Kutta 4-5 scheme.

We proceed by constructing a probability distribution $\pi(\sigma)$ that is
maximised when the overall amount of error produced by our probabilistic integrator
matches that suggested by the error indicator. We perform this scale matching 
by: (i)  using a Gaussian approximation of our random solver at each step $k$,
$
\tilde{\mu}_k^{h,\sigma} = \mathcal{N}(\mathbb{E} (\Uhs_k), \mathbb{V}(\Uhs_k ));
$
and (ii) by constructing a second Gaussian measure from the deterministic solver, $\Uhz_k$, and the available error indicator, $E_k$,
$
\nu_k^\sigma = \mathcal{N}(\Uhz_k, (E_k)^2).
$
Thus (ii) is a natural Gaussian representation of the uncertainty based upon information from deterministic methods. We construct $\pi(\sigma)$ by penalising the distance between these two normal distributions at every step:
$
\pi(\sigma) \propto \prod_k \exp \left( -d(\tilde{\mu}_k^{h,\sigma}, \nu_k^\sigma)  \right)
\label{eq:sigmaDist}
$.
We find that the Bhattacharyya distance (closely related to the Hellinger metric) works well \cite{Kailath1967bhattacharyya}, 
since it diverges quickly if either the mean or variance differs between the two inputs. Computing $\pi(\sigma)$ requires the mean and variance of the process $\Uhs_k$, which can be estimated using Monte Carlo. If the ODE state is a vector, we take the product of the univariate Bhattacharyya distances. Note that this calibration depends not only on the problem of interest, but also on the initial conditions and any parameters of the ODE.

Returning to the FitzHugh--Nagumo model, we apply this strategy for choosing the noise scale. We sampled from $\pi(\sigma)$ using pseudomarginal MCMC \cite{andrieu2009pseudo}. The sampled densities are shown in Figure \ref{fig:fitz_fwd_density}; the similarity in the inferred $\sigma$ suggests that the asymptotic convergence rate is nearly exact in this case. Since this distribution is well peaked, MCMC mixes very quickly and is well represented by the MAP value, hence we proceed using $\sigma^\ast = \argmax \pi(\sigma)$. Next, we examine the quality of the scale matching by plotting the magnitudes of the random variation against the error indicator in Figure \ref{fig:fitz_fwd_summary}, for several different step-sizes, observing good agreement of the marginal variances. 
Although the marginal variances are empirically matched to the error indicator, our measure is still reveals non-Gaussian structure and correlations in time not revealed by the deterministic analysis. 
As described, this procedure requires fixed inputs to the ODE, but it is straightforward to marginalise out a prior distribution over input parameters.

\begin{figure}[htb]
\centering
\includegraphics[scale=.5]{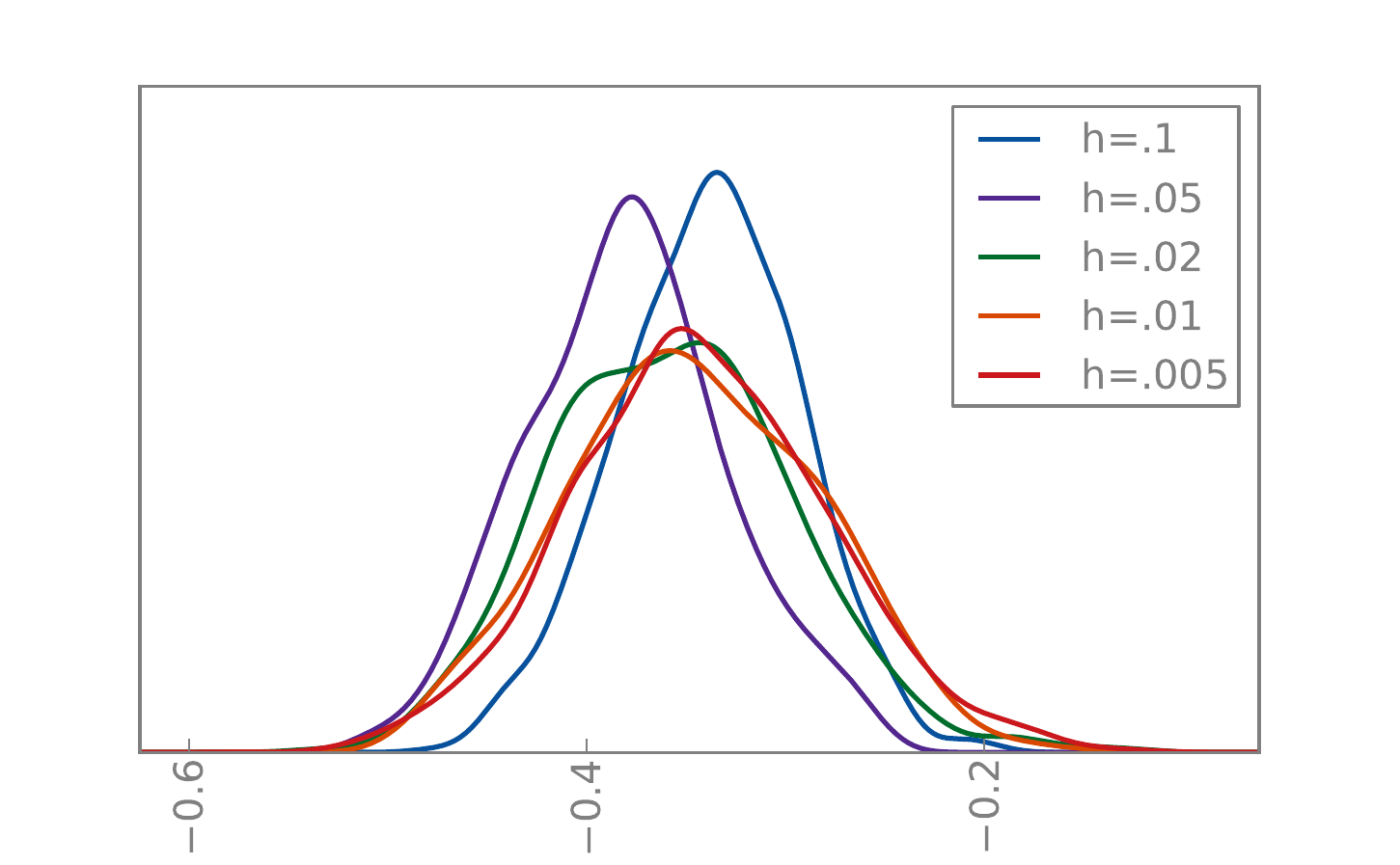}

\caption{The sampled density of $\pi(\sigma)$ in the FitzHugh--Nagumo model, for several different step-sizes.}
\label{fig:fitz_fwd_density}
\end{figure}

\begin{figure}[htb]
\centering
\includegraphics[scale=.5]{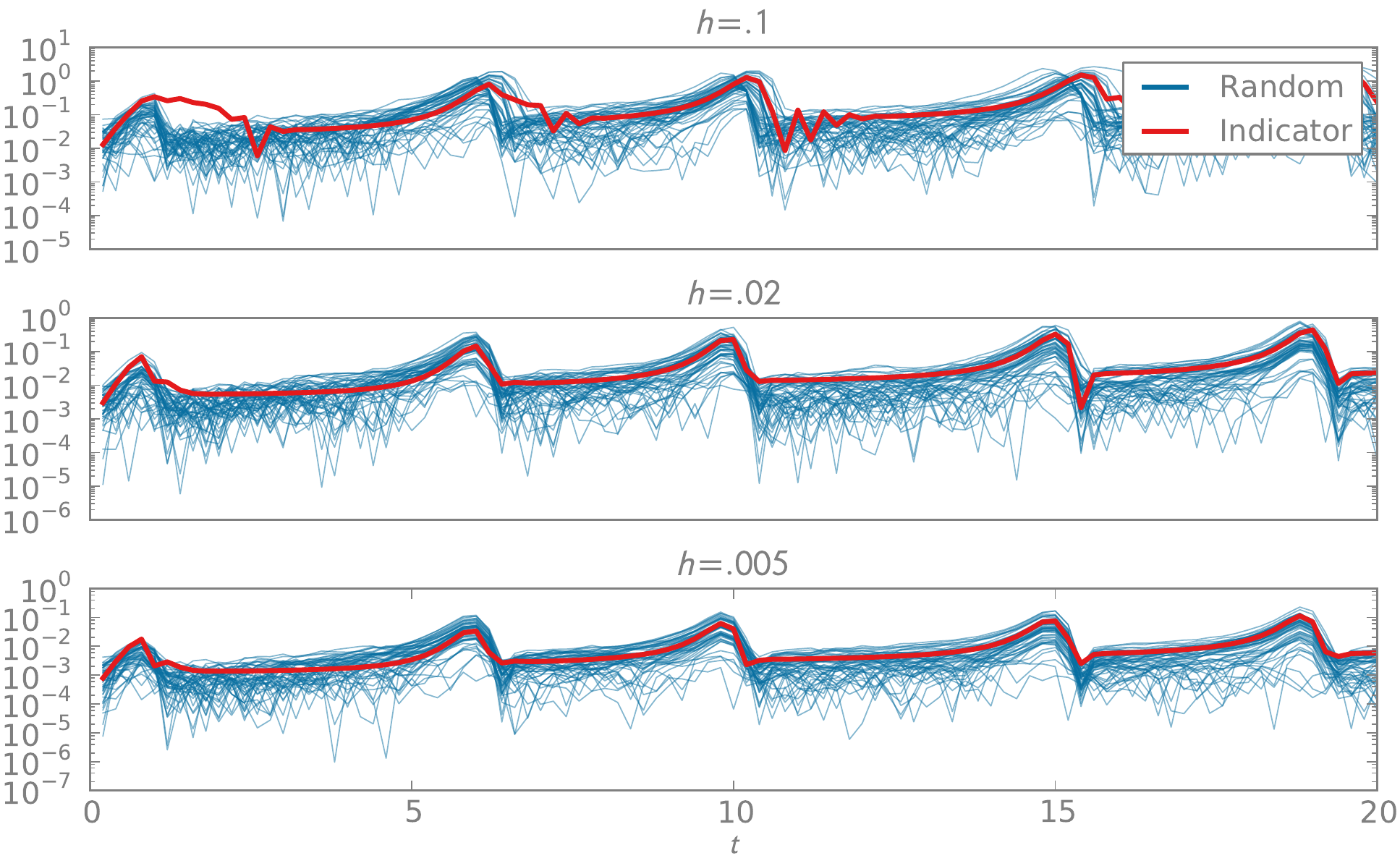}

\caption{A comparison of the error indicator for the $V$ species of the FitzHugh--Nagumo model (blue) and the observed variation in the calibrated probabilistic solver. The red curves depict fifty samples of the magnitude of the difference between a standard Euler solver for several step-sizes and the equivalent randomised variant, using $\sigma^\ast$, maximising $\pi(\sigma)$.}
\label{fig:fitz_fwd_summary}
\end{figure}

%
%

\subsection{Constructing Bayesian posterior inference problems}

Given the calibrated probabilistic ODE solvers described above, let us consider how to incorporate them into inference problems.

Assume we are interested in inferring parameters of the ODE given noisy observations of the state. Specifically, we wish to infer parameters $\theta \in \mathbb{R}^d$ for the differential equation $\dot{u} = f(u, \theta)$, with fixed initial conditions $u(t=0) = u_0 $ (a straightforward modification may include inference on initial conditions). Assume we are provided with data $d \in \mathbb{R}^m$,  $d_j = u(\tau_j) + \eta_j$ at some collection of times $\tau_j$, corrupted by i.i.d. noise, $\eta_j \sim \mathcal{N}(0, \Gamma)$. If we have prior $\mathbb{Q}(\theta)$, the posterior we wish to explore is,
$
\mathbb{P}(\theta \mid d) \propto \mathbb{Q}(\theta) \mathcal{L}(d,u(\theta)),
$
where density $\mathcal{L}$ compactly summarises this likelihood model. 

The standard computational strategy is to simply replace the unavailable trajectory $u$ with a numerical approximation, inducing approximate posterior 
$
\Phz(\theta \mid d) \propto \mathbb{Q}(\theta) \mathcal{L}(d,\Uhz(\theta)).
$
Informally, this approximation will be accurate when the error in the numerical solver is small compared to $\Gamma$ and often converges formally to $\mathbb{P}(\theta \mid d)$ as $h \to 0$ \cite{DS15}. However, the error in $\Uhz$ might be non-Gaussian and highly correlated, which could make it surprisingly easy for errors at finite $h$ to have substantial impact.

In this work, we are concerned about the undue optimism in the predicted variance; without a rigorous treatment of the solver uncertainty, the posterior can concentrate around an arbitrary parameter value even when the deterministic solver is inaccurate, and is merely able to reproduce the data by coincidence.
The more conventional concern is that any error in the solver will be transferred into posterior bias.
Practitioners commonly alleviate both concerns by tuning solvers until they are nearly perfect, however, we note that this may be computationally prohibitive in many contemporary statistical applications.


We can construct a different posterior that includes the uncertainty in the solver by taking an expectation over random solutions to the ODE
\begin{equation}
\Phs(\theta \mid d) \propto \mathbb{Q}(\theta)  \int \mathcal{L}(d,\Uhs(\theta, \xi)) d\xi.
\end{equation}
Intuitively, this construction favours parameters that exhibit agreement with 
the entire family of uncertain trajectories, not just the one produced by the deterministic integrator. The typical effect of this expectation is to increase the posterior uncertainty on $\theta$, preventing the inappropriate posterior collapse we are concerned about. Indeed, if the integrator cannot resolve the underlying dynamics, $h^{p+1/2}\sigma$ will be large. Then $\Uhs(\theta, \xi)$ is independent of $\theta$, hence the prior is recovered, $\Phs(\theta \mid d) \approx \mathbb{Q}(\theta)$.

Notice that as $h \to 0$, both the measures $\Phz$ and $\Phs$ collapse to the posterior using the analytic solution, $\mathbb{P}$, hence both methods are correct. We do not expect the bias of $\Phs$ to be improved, since all of the averaged trajectories are of the same quality as the deterministic solver in $\Phz$. We now construct an analytic inference problem demonstrating these behaviours.

\begin{example}
Consider inferring the initial condition, $u_0\in \mathbb{R}$,  of the 
scalar linear differential equation, 
$\dot{u} = \lambda u,$ with $\lambda>0.$ 
We apply a numerical method to produce
the approximation $U_k \approx u(kh)$.
We observe the state at some times $t=kh$, with additive
noise $\eta_k \sim \mathcal{N}(0, \gamma^2)$: $d_k = U_k + \eta_k$. 
If we use a deterministic Euler solver, the model predicts
$U_{k} = (1+h\lambda)^k u_0.$
These model predictions coincide with the slightly perturbed problem 
$$\frac{du}{dt} = h^{-1}\log(1+\lambda h)u,$$ 
hence error increases with time. However, the assumed observational model does not allow for this, as the observation variance is $\gamma^2$ 
at all times. 

In contrast, our proposed probabilistic Euler solver predicts 
$$U_{k} = (1+h\lambda)^k u_0 + \sigma h^{3/2} \sum_{j=0}^{k-1} \xi_j (1+\lambda h)^{k-j-1},$$
where we have made the natural choice $p=q$, where $\sigma$ is the problem 
dependent scaling of the noise and the $\xi_k$ are i.i.d. $\mathcal{N}(0,1).$ 
For a single observation, $\eta_k$ and every $\xi_k$ are independent, so we may rearrange the equation to consider the perturbation as part of the observation operator. Hence, a single observation at $k$ has effective variance
$$\gamma_h^2:=\gamma^2 +  \sigma^2 h^{3} \sum_{j=0}^{k-1} (1+\lambda h)^{2(k-j-1)}=
\gamma^2+\sigma^2 h^{3}\frac{(1+\lambda h)^{2k}-1}{(1+\lambda h)^2-1}.$$
Thus, late time observations are modelled as being increasingly inaccurate.

Consider inferring $u_0$, given a single observation $d_k$
at time $k$. If a Gaussian prior $\mathcal{N}(m_0,\zeta_0^2)$
is specified for $u_0$, then the posterior is $\mathcal{N}(m,\zeta^2)$,
where
\begin{align*}
\zeta^{-2}=\frac{(1+h\lambda)^{2k}}{\gamma_h^2}+\zeta_0^{-2}, &\qquad 
\zeta^{-2}m=\frac{(1+h\lambda)^k d_k}{\gamma_h^2}+\zeta_0^{-2}m_0.
\end{align*}
The observation precision is scaled by $(1+h\lambda)^{2k}$ because late time data contain increasing information.
Assume that the data are $d_k=e^{\lambda kh}u_0^{\dagger}+\gamma \eta^{\dagger}$,
for some given true initial condition $u_0^{\dagger}$ and noise realisation
$\eta^{\dagger}.$ Consider now the asymptotic regime where $h$ is fixed and
$k \to \infty$. For the standard Euler method, where $\gamma_h=\gamma$,
we see that $\zeta^2 \to 0$, whilst $m \asymp \bigl((1+h\lambda)^{-1}e^{h\lambda}\bigr)^{k} u_0^\dagger$. Thus the inference scheme becomes increasingly certain of
the wrong answer: the variance tends to zero and the mean tends to infinity.

In contrast, with a randomised
integrator, the fixed $h$, large $k$ asymptotics are
\begin{align*}
\zeta^2 \asymp \frac{1}{\zeta_0^{-2}+\lambda(2+\lambda h)\sigma^{-2}h^{-2}}, \qquad 
m \asymp \frac{\left((1+h\lambda)^{-1}e^{h\lambda}\right)^{k} u_0^\dagger }{1+\zeta_0^{-2}\sigma^2h^2\lambda^{-1}(2+\lambda h)^{-1}}.
\end{align*}
Thus, the mean blows up at a modified rate, but the variance remains positive.
%
%
%
\end{example}

We take an empirical Bayes approach to choosing $\sigma$, that is, using a constant, fixed value $\sigma^\ast = \argmax\pi(\sigma)$, chosen before the data is observed. Joint inference of the parameters and the noise scale suffer from well-known MCMC mixing issues in Bayesian hierarchic models. 

We now return to the FitzHugh--Nagumo model; given fixed initial conditions, we attempt to recover parameters $\theta = (a,b,c)$ from observations of both species at times $\tau = 1,2,\ldots,40$. The priors are  log-normal, centred on the true value with unit variance, and with observational noise $\Gamma = 0.001$. 
The data is generated from a high quality solution, and we perform inference using Euler integrators with various step-sizes, $h \in \{0.005, 0.01, 0.02, 0.05, 0.1\}$, spanning a range of accurate and inaccurate integrators.

We first perform the inferences with naive use of deterministic Euler integrators. We simulate from each posterior using delayed rejection MCMC \cite{haario2006dram} for 100,000 steps, discarding the first 10\% as burn in. The posteriors are shown in Figure \ref{fig:fitz_detPost}. Observe the undesirable
 concentration of every posterior, even those with poor solvers; the posteriors are almost mutually singular, hence clearly the posterior widths are meaningless.
 
Secondly, we repeat the experiment using our probabilistic Euler integrators, with results shown in Figure \ref{fig:fitz_randPost}. We use a noisy pseudomarginal MCMC method, whose fast mixing is helpful for these initial experiments \cite{medina2015noisymh}. These posteriors are significantly improved, exhibiting greater mutual agreement and obvious increasing concentration with improving solver quality. The posteriors are not perfectly nested, possible evidence that our choice of scale parameter is imperfect, or that the assumption of locally Gaussian error deteriorates. As expected, the bias of these posteriors is essentially unchanged.

%

\begin{figure}[p]
\centering
\includegraphics[scale=.6,trim=0in .25in 0in .25in]{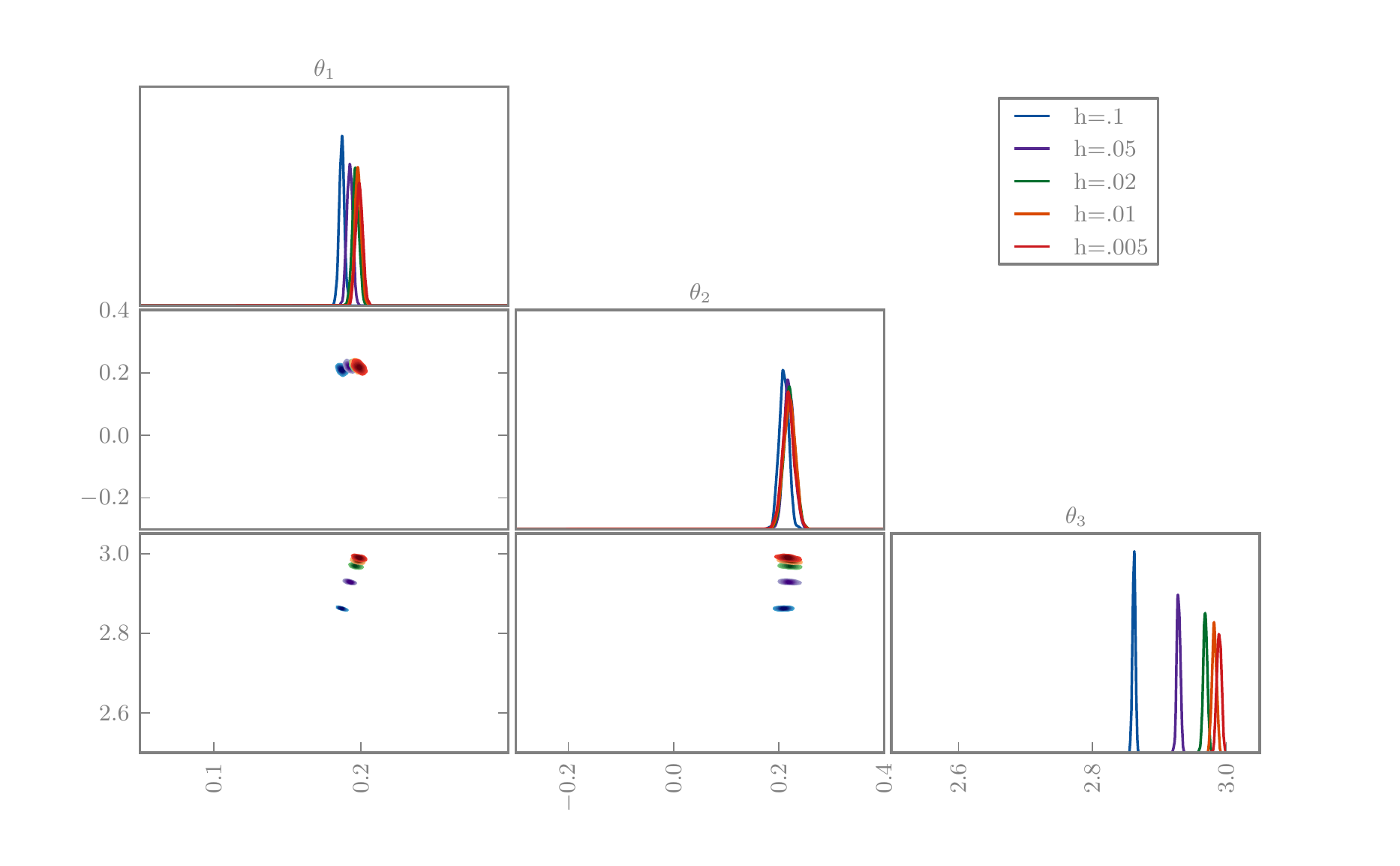}

\caption{The posterior marginals of the FitzHugh--Nagumo inference problem using deterministic integrators with various step-sizes.}
\label{fig:fitz_detPost}
\end{figure}

\begin{figure}[p]
\centering
\includegraphics[scale=.6,trim=0in .25in 0in .25in]{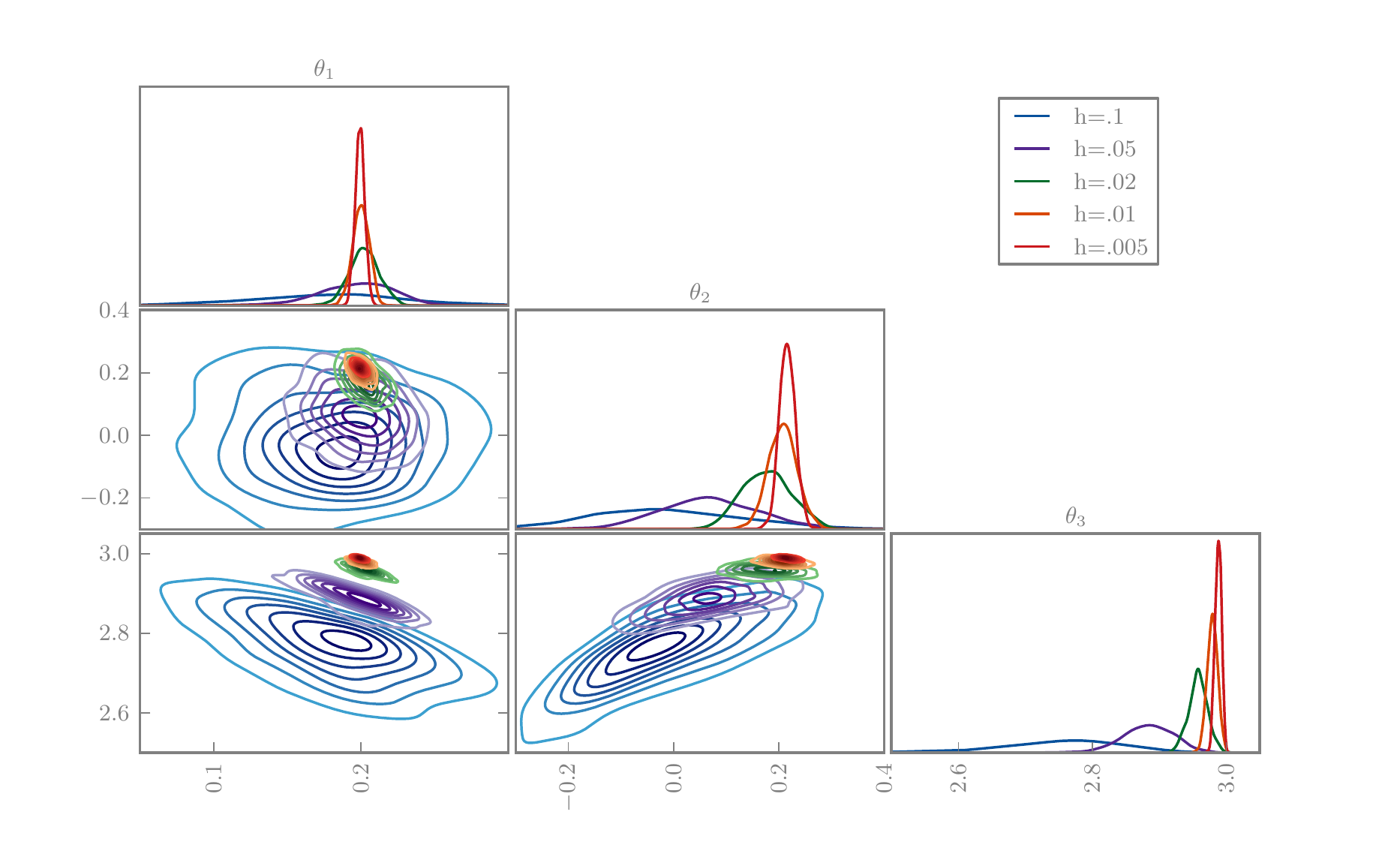}

\caption{The posterior marginals of the FitzHugh--Nagumo inference problem using probabilistic integrators with various step-sizes.}
\label{fig:fitz_randPost}
\end{figure}

\section{Probabilistic Solvers for Partial Differential Equations}
\label{sec:pde}

We now turn to present a framework for probabilistic solutions to partial differential equations, working within the finite element setting.
Our approach to PDE problems is analogous to that for ODEs, except that now we 
randomly perturb the finite element basis functions. 
The assumptions required, and the strong convergence results, 
closely resemble their ODE counterparts.

\subsection{Probabilistic finite element method for variational problems}

Let $\mathcal{V}$ be a Hilbert space of real valued functions
defined on a bounded polygonal domain $D \subset \bbR^d$.
Consider a weak formulation of a
linear PDE specified via a symmetric bilinear form 
$a:\mathcal{V}\times \mathcal{V}\longrightarrow \mathbb{R}$,
and a linear form $r:\mathcal{V}\longrightarrow \mathbb{R}$ to give the problem
of finding
$
u\in \mathcal{V}: 
\, a(u,v)=r(v), \quad \forall v\in \mathcal{V}.
\label{eq:eq}
$
This problem can be approximated by specifying a finite dimensional subspace
$\mathcal{V}^h \subset \mathcal{V}$  and seeking a solution in $\cV^h$ instead. This leads to a finite dimensional problem to be solved 
for the approximation $U$:
\begin{equation}
U\in \mathcal{V}^h: 
a(U,v)=r(v), \quad \forall v\in \mathcal{V}^h.
\label{eq:ga}
\end{equation}
This is known as the Galerkin method.

We will work in the setting of finite element methods, 
assuming that $\mathcal{V}^h=\mathrm{span}\{\phi_j\}_{j=1}^J$
where $\phi_j$ is locally supported on a grid of points $\{x_j\}_{j=1}^J.$
The parameter $h$ is introduced to measure the diameter of the finite
elements.  We will also assume that
\begin{equation}
\label{eq:nodal}
\phi_j(x_k)=\delta_{jk}.
\end{equation}
Any element $U \in \cV^h$ can then be written as
\begin{equation}
\label{eq:nodalw}
U(x)=\sum_{j=1}^{J} U_j \phi_j(x)
\end{equation}
from which it follows that $U(x_k)=U_k.$
The Galerkin method then gives the following equation for
$\us=(U_1,\ldots, U_J)^T:$
\begin{equation}
\label{eq:ls}
A\us=\rs
\end{equation}
where
$A_{jk}=a(\phi_j,\phi_k)$ and $\rs_k=r(\phi_k).$ 

In order to account for uncertainty introduced by the
numerical method, 
we will assume that each  basis
function $\phi_j$ can be split into the sum of a systematic
part $\phs_j$ and random part $\phr_j$, where both $\phi_j$ and $\phs_j$ satisfy the
nodal property \eqref{eq:nodal}, hence $\phr_j(x_k) = 0$. Furthermore we assume that
each $\phr_j$ shares the same compact support as
the corresponding $\phs_j$, preserving the sparsity structure of the underlying deterministic method. 
%
%

\subsection{Strong convergence result}
As in the ODE case, we begin our convergence analysis with assumptions constraining the random perturbations and the underlying deterministic approximation.
The bilinear form $a(\cdot,\cdot)$ is assumed to induce an inner-product,
and then norm via $\|\cdot\|_a^2=a(\cdot,\cdot);$ furthermore we
assume that this norm is equivalent to the norm on $\cV$.
Throughout $\E$ denotes expectation with respect to the random basis
functions.

\begin{assumption}
\label{assumption:pde1}
The collection of random basis functions $\{\phr_j\}_{j=1}^J$ are independent,
zero-mean, Gaussian random fields, each of which satisfies $\phr_j(x_k)=0$ and shares the same support as the corresponding systematic basis function $\phs_j.$ 
For all $j$, the number of basis
functions with index $k$ which share the support of the basis functions
with index $j$ is bounded independently of $J$, the total number of
basis functions. 
Furthermore the basis functions are scaled so that
$\sum_{j=1}^J \E \|\phr_j\|_a^2 \le Ch^{2p}.$ 
\end{assumption}

\begin{assumption}
\label{assumption:pde2}
The true solution $u$ of problem \eqref{eq:eq} is in $L^{\infty}(D).$
Furthermore the standard deterministic
interpolant of the true solution, defined by
$\vs:=\sum_{j=1}^J u(x_j)\phs_j,$
satisfies $\|u-\vs\|_a \le Ch^q.$
\end{assumption}

\begin{theorem}
\label{t:p1}
Under Assumptions \ref{assumption:pde1} and \ref{assumption:pde2} it follows that
the approximation $U$, given by \eqref{eq:ga}, satisfies
$$\E \|u-U\|_a^2 \le Ch^{2\min\{p,q\}}.$$
\end{theorem}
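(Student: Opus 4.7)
The plan is to exploit the pathwise best-approximation property of the Galerkin method in the energy norm $\|\cdot\|_a$ and then reduce the analysis to bounding the error of a random nodal interpolant of $u$. Since $a(\cdot,\cdot)$ is symmetric and induces the norm $\|\cdot\|_a$, for every realisation of the random basis the Galerkin approximation $U$ defined by \eqref{eq:ga} is the $\|\cdot\|_a$-orthogonal projection of $u$ onto $\cV^h$. Hence, pathwise, $\|u-U\|_a \le \|u-v\|_a$ for every $v \in \cV^h$, and I am free to pick a convenient $v$.

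The natural choice is the random nodal interpolant
$$v := \sum_{j=1}^J u(x_j)\phi_j = v^s + v^r, \qquad v^r := \sum_{j=1}^J u(x_j)\phi_j^r,$$
where $v^s$ is the deterministic interpolant appearing in Assumption \ref{assumption:pde2}. The nodal property \eqref{eq:nodal} together with $\phi_j^r(x_k)=0$ guarantees that $v \in \cV^h$ and matches $u$ at the nodes. Applying the triangle inequality and the elementary inequality $(a+b)^2 \le 2a^2+2b^2$ gives
$$\|u-U\|_a^2 \le 2\|u - v^s\|_a^2 + 2\|v^r\|_a^2.$$
Assumption \ref{assumption:pde2} controls the first term by $2Ch^{2q}$ deterministically, so only $\E\|v^r\|_a^2$ remains.

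For the stochastic remainder, I would expand bilinearly and take the expectation:
$$\E\|v^r\|_a^2 = \sum_{j,k=1}^J u(x_j)u(x_k)\,\E\, a(\phi_j^r,\phi_k^r).$$
Independence of the $\phi_j^r$ together with their zero mean kills all cross terms: after interchanging $\E$ with the integrals defining $a(\phi_j^r,\phi_k^r)$ via Fubini, the term vanishes whenever $j \neq k$. Hence
$$\E\|v^r\|_a^2 \;=\; \sum_{j=1}^J u(x_j)^2\,\E\|\phi_j^r\|_a^2 \;\le\; \|u\|_\infty^2 \sum_{j=1}^J \E\|\phi_j^r\|_a^2 \;\le\; C\|u\|_\infty^2 h^{2p},$$
using the $L^\infty$ bound on $u$ from Assumption \ref{assumption:pde2} and the aggregate variance bound from Assumption \ref{assumption:pde1}. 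Combining with the deterministic interpolation bound yields $\E\|u-U\|_a^2 \le C(h^{2p}+h^{2q}) \le \widetilde{C}h^{2\min\{p,q\}}$.

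The main obstacle I anticipate is not in any single step of algebra but in justifying the manipulation of random series cleanly: one must verify that $v^r$ is a well-defined element of $\cV^h$ almost surely and that the expectation commutes with the bilinear form $a(\cdot,\cdot)$ and with the (double) sum. This is exactly what the bounded-overlap of supports clause of Assumption \ref{assumption:pde1} ensures, since the sum defining $v^r(x)$ then reduces, for each $x$, to a finite sum of Gaussians, making all interchanges and the cross-term cancellation routine.
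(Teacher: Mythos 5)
Your proposal is correct and follows essentially the same route as the paper's proof: Galerkin best approximation in the energy norm, the random nodal interpolant $v = \vs + \vr$ as comparison function, and the mean-zero/independence structure of the $\phr_j$ together with the $L^\infty$ bound on $u$ to control $\E\|\vr\|_a^2$ by $Ch^{2p}$. The only cosmetic difference is that you use the triangle inequality with a factor of $2$ where the paper notes that the cross term $\E\, a(u-\vs,\vr)$ vanishes exactly, giving the identity $\E\|u-v\|_a^2 = \|u-\vs\|_a^2 + \E\|\vr\|_a^2$; both yield the same rate.
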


As for ODEs, the solver accuracy is limited by either the amount of noise injected or the convergence rate of the underlying deterministic method; the choice
$p=q$ is natural since it introduces the maximum amount of noise that does not 
disturb the deterministic rate of convergence.

\bp
Recall the Galerkin orthogonality property
which follows from subtracting the approximate variational principle
from the true variational principle: it states that, for $e=u-U$,
\begin{equation}
\label{eq:Gorthog}
a(e,v)= 0, \quad \forall v\in \mathcal{V}^h.
\end{equation}
From this it follows that
\begin{equation}
\|e\|_{a}  \le  \|u-v\|_{a}, \quad \forall v \in \cV^h. 
\label{eq:fem3}
\end{equation}
To see this note that,
for any $v \in \cV^h$, 
the orthogonality property \eqref{eq:Gorthog} gives
\begin{eqnarray}
a(e,e) &=& a(e,e+U-v)
= a(e,u-v).
\label{eq:fem1}
\end{eqnarray}
Thus, by Cauchy--Schwarz, $\|e\|_{a}^2  \le  \|e\|_{a}\|u-v\|_{a}, \quad \forall v \in \cV^h $
implying \eqref{eq:fem3}. We now set, for $v \in \cV$, 
\begin{eqnarray*}
v(x) &=& \sum_{j=1}^J u(x_j)\phi_j(x)
= \sum_{j=1}^J u(x_j)\phs_j(x)+\sum_{j=1}^J u(x_j)\phr_j(x)\\
&=:& \vs(x)+\vr(x).
\end{eqnarray*}
By the mean-zero and independence properties of the random basis functions 
we deduce that
\begin{eqnarray*}
\E \|u-v\|_a^2 &=& \E a(u-v,u-v)
= \E a(u-\vs,u-\vs)+\E a(\vr,\vr)\\
&=& \|u-\vs\|_a^2+\sum_{j=1}^J u(x_j)^2 \E \|\phr_j\|_a^2.
\end{eqnarray*}
The result follows from Assumptions \ref{assumption:pde1} and \ref{assumption:pde2}.
\ep

\subsection{Poisson solver in two dimensions}
Consider a Poisson equation with Dirichlet boundary conditions in dimension $d=2$, namely
\begin{eqnarray*}
&-\triangle u=f, \quad & x \in D,\\
&u=0, \quad & x \in \partial D.
\end{eqnarray*}
We set $\cV=H^1_0(D)$ and $H$ to be the space $L^2(D)$ with
inner-product $\langle \cdot, \cdot \rangle$ and resulting
norm $| \cdot |^2=\langle \cdot, \cdot \rangle.$ 
The weak formulation of the problem has the form \eqref{eq:eq}
with 
$$a(u,v)=\int_{D} \nabla u(x) \nabla v(x) dx, \quad r(v)=\langle f,v \rangle.$$
Now consider piecewise linear finite elements satisfying the assumptions
of Section 4.2 in \cite{jo92} and take these to comprise
the set $\{\phs_j\}_{j=1}^J.$ Then $h$ measures the width of
the triangulation of the finite element mesh. Assuming that $f \in H$ it follows that
$u \in H^2(D)$ and that
\begin{equation}
\label{eq:bnd}
\|u-\vs\|_a \le Ch\|u\|_{H^2}.
\end{equation}
Thus $q=1$.
We choose random basis members $\{\phr_j\}_{j=1}^J$ so that
Assumptions \ref{assumption:pde1} hold with $p=1$. Theorem \ref{t:p1}
then shows that, for $e=u-U$, $\E \|e\|_a^2 \le Ch^{2}.$
Note, however, that in the deterministic case we expect an improved
rate of convergence in the function space $H$. We show that such a result 
also holds in our random setting.

Note that, under Assumptions \ref{assumption:pde1}, 
if we introduce $\gamma_{jk}$ that is $1$ when two basis functions
have overlapping support, and $0$ otherwise, then
$\gamma_{jk}$ is symmetric and there is constant $C$, independent of 
$j$ and $J$, such that $\sum_{k=1}^J \gamma_{jk} \le C.$
Now let $\varphi$ solve the equation
$a(\varphi,v)=\langle e,v \rangle, \quad \forall v \in \cV.$
Then $\|\varphi\|_{H^2} \le C|e|.$
We define $\vars$ and $\varr$ in analogy with the definitions of
$\vs$ and $\vr$. Following the usual arguments for application of the
Aubin--Nitsche trick \cite{jo92}, we have
$
|e|^2=a(e,\varphi)=a(e,\varphi-\vars-\varr).
$
Thus
\begin{equation}
|e|^2  \le  \|e\|_a\|\varphi-\vars-\varr\|_a
 \le  \sqrt{2}\|e\|_a\Bigl(\|\varphi-\vars\|_a^2+\|\varr\|_a^2\Bigl)^{\frac12}. \label{eq:dnb2}
\end{equation}
We note that
$\varr(x)=\sum_{j=1}^{J} \varphi(x_j)\phr_j(x)=\|\varphi\|_{H^2}
\sum_{j=1}^{J} a_j \phr_j(x)$
where, by Sobolev embedding ($d=2$ here),
$a_j:=\varphi(x_j)/\|\varphi\|_{H^2}$
satisfies $\max_{1 \le j \le J}|a_j| \le C.$
Note, however, that the $a_j$ are random and correlated with all of
the random basis functions.
Using this, together with \eqref{eq:bnd}, in \eqref{eq:dnb2}, we obtain
$$|e|^2  \le C\|e\|_a\Bigl(h^2+\bigl\|\sum_{j=1}^{J} a_j \phr_j(x)\bigr\|_a^2\Bigr)^{\frac12}\|\varphi\|_{H^2}.$$
We see that
$$|e| \le C\|e\|_a\Bigl(h^2+
\sum_{j=1}^J \sum_{k=1}^J a_ja_k \,a(\phr_j,\phr_k)\Bigr)^{\frac12}.$$
From this and the symmetry of $\gamma_{jk}$, we obtain
\begin{align*}
|e| &\le C\|e\|_a\Bigl(h^2+\sum_{j=1}^J \sum_{k=1}^J \gamma_{jk}
\bigl(\|\phr_j\|_a^2+\|\phr_k\|_a^2\bigr)\Bigr)^{\frac12}\\
&\le C\|e\|_a\Bigl(h^2+2C\sum_{j=1}^J \|\phr_j\|_a^2\Bigr)^{\frac12}.
\end{align*}
Taking expectations, using that $p=q=1$, we find, using Assumptions \ref{assumption:pde1}, that
$\E |e| \le Ch \bigl(\E\bigl\|e\|_a^2\bigr)^{\frac12} \le Ch^2$
as desired.  Thus we recover the extra
order of convergence over the rate $1$ in the $\|\cdot\|_a$ norm
(although the improved rate is in $L^1(\Omega;H)$ whilst the lower rate
of convergence is in $L^2(\Omega;\cV).$)

\section{PDE Inference and Numerics}
\label{sec:pdenumerics}

We now perform numerical experiments using probabilistic solvers for elliptic PDEs. Specifically, we perform inference in a 1D elliptic PDE, 
$
\nabla \cdot (\kappa(x) \nabla u(x)) = 4x
$
for $x \in [0,1]$, given boundary conditions $u(0) = 0, u(1) = 2$. We represent $\log \kappa$ as piecewise constant over ten equal-sized intervals; the first, on $x \in [0,.1)$ is fixed to be one to avoid non-identifiability issues, and the other nine are given a prior $\theta_i = \log \kappa_i \sim \mathcal{N}(0,1)$. Observations of the field $u$ are provided at $x=(0.1, 0.2, \ldots 0.9)$, with i.i.d. Gaussian error, $\mathcal{N}(0, 10^{-5})$; the simulated observations were generated using a fine grid and quadratic finite elements, then perturbed with error from this distribution. 

Again we investigate the posterior produced at various grid sizes, using both 
deterministic and randomised solvers. The randomised basis functions are draws from a Brownian bridge conditioned to be zero at the nodal points, implemented in practice with a truncated Karhunen--Lo\`{e}ve expansion. The covariance operator may be viewed as a fractional Laplacian, as discussed in \cite{lindgren2011explicit}.
%
%
The scaling $\sigma$ is again determined by maximising the distribution described in Section \ref{sec:numerics:calibration}, where the error indicator compares linear to quadratic basis functions, and we marginalise out the prior over the $\kappa_i$ values.


%

The posteriors are depicted in Figures \ref{fig:elliptic1d_det} and \ref{fig:elliptic1d_rand}. As in the ODE examples, the deterministic solvers lead to incompatible posteriors for varying grid sizes. In contrast, the randomised solvers suggest increasing confidence as the grid is refined, as desired. The coarsest grid size uses an obviously inadequate ten elements, but this is only apparent in the randomised posterior.

\begin{figure}[p]
\centering
\includegraphics[scale=.6, trim=0in .25in 0in .25in]{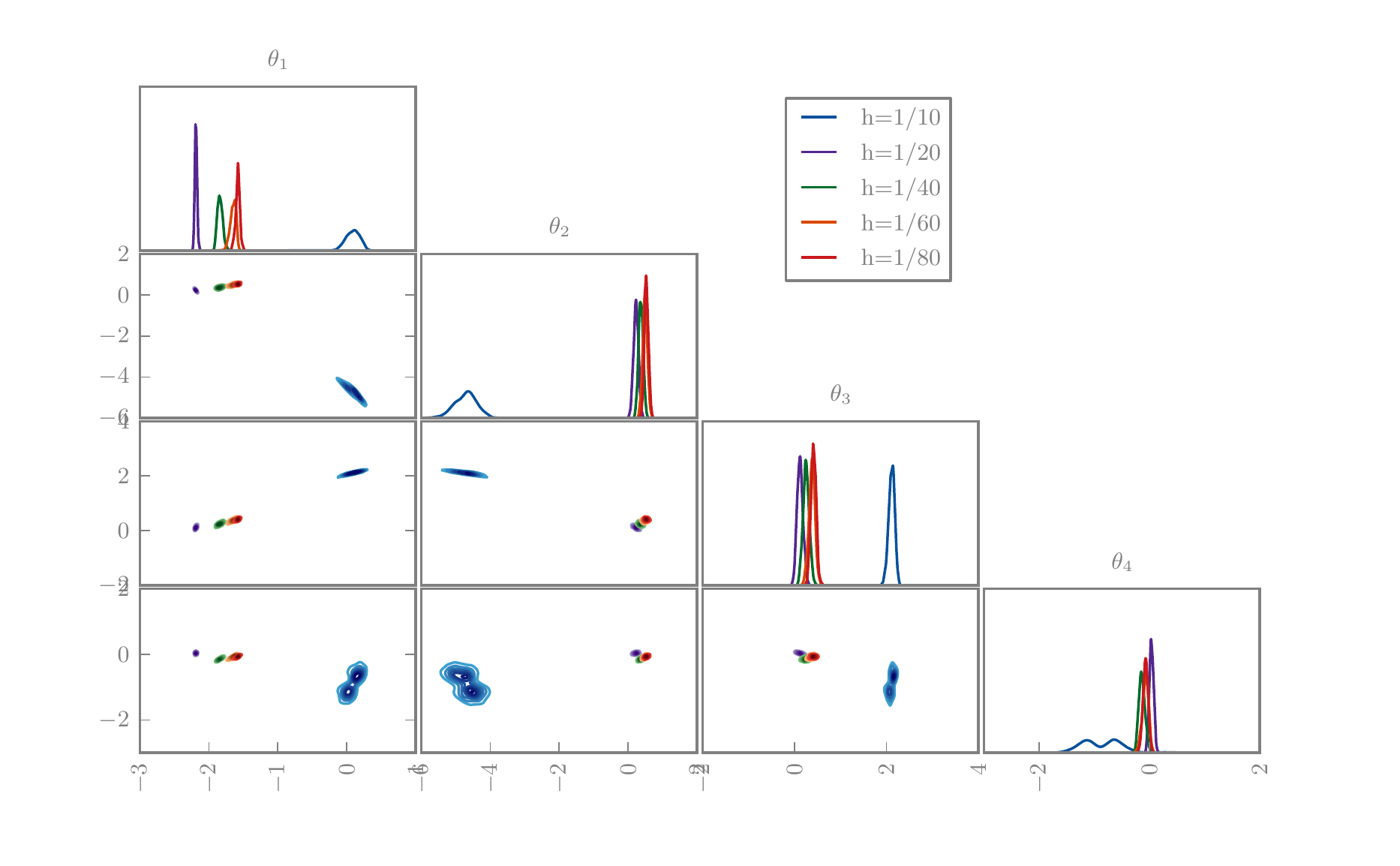}

\caption{The marginal posterior distributions for the first four coefficients in 1D elliptic inverse problem using a classic deterministic solver with various grid sizes.}
\label{fig:elliptic1d_det}
\end{figure}

\begin{figure}[p]
\centering
\includegraphics[scale=.6, trim=0in .25in 0in .25in]{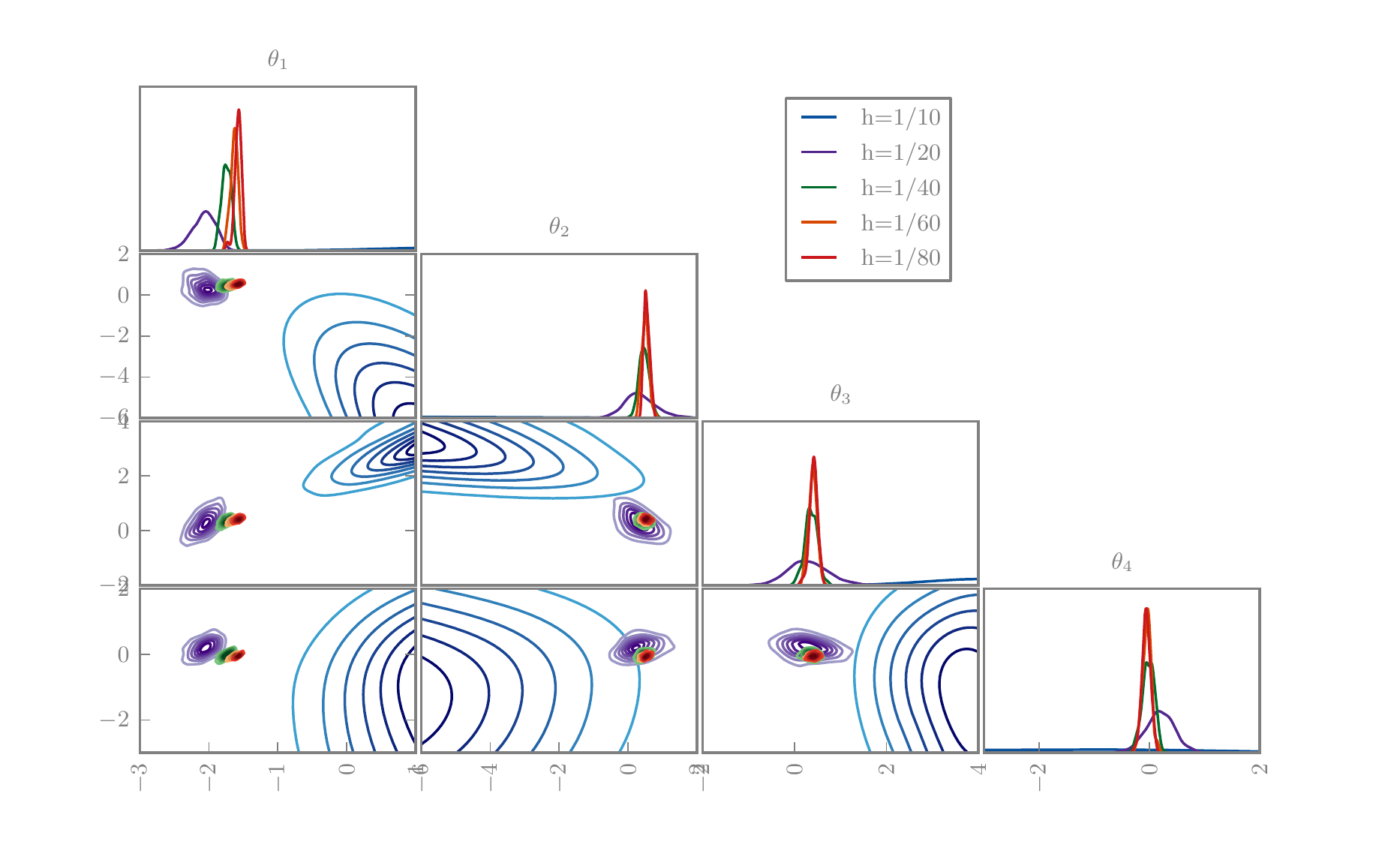}

\caption{The marginal posterior distributions for the first four coefficients in 1D elliptic inverse problem using a randomised solver with various grid sizes.}
\label{fig:elliptic1d_rand}
\end{figure}

\section{Conclusions}
We have presented theory and methods for probabilistic approaches to the 
numerical solution of both ordinary and partial differential equations. 
These methods give rise to a probability measure whose 
qualitative and quantitative properties account for and quantify the uncertainty induced by finite dimensional approximation. We provide a theoretical analysis of the properties of these probabilistic integrators and demonstrate that they induce more coherent inferences in illustrative empirical examples. This work opens the door for statistical analysis that explicitly incorporates numerical uncertainty in many important classes of contemporary statistical problems across the sciences, including engineering, climatology, and biology.

Drawing parallels to model error \cite{kennedy2001bayesian}, 
we can consider our intrusive modifications to be a highly-specialised discrepancy model, designed using our intimate knowledge of the structure and properties of numerical methods. Further study is required to compare our approach to existing methods, and we hope to develop other settings where modifying the internal structure of numerical methods can produce computationally and analytically tractable measures of uncertainty. 
Developing robust, practical solvers, and efficiently performing computations with the additional uncertainty we propose remains a challenge. Finally, future work may be able to exploit our coherent uncertainty propagation to optimise the tradeoff between solver cost and statistical variance.

\bibliographystyle{imsart-nameyear}
\bibliography{citations.bib}

\begin{thebibliography}{25}

\bibitem[\protect\citeauthoryear{Andrieu and Roberts}{2009}]{andrieu2009pseudo}
\begin{barticle}[author]
\bauthor{\bsnm{Andrieu},~\bfnm{Christophe}\binits{C.}} \AND
  \bauthor{\bsnm{Roberts},~\bfnm{Gareth~O}\binits{G.~O.}}
(\byear{2009}).
\btitle{The pseudo-marginal approach for efficient Monte Carlo computations}.
\bjournal{The Annals of Statistics}
\bpages{697--725}.
\end{barticle}
\endbibitem

\bibitem[\protect\citeauthoryear{Arnold, Calvetti and
  Somersalo}{2013}]{arnold2013linear}
\begin{barticle}[author]
\bauthor{\bsnm{Arnold},~\bfnm{Andrea}\binits{A.}},
  \bauthor{\bsnm{Calvetti},~\bfnm{Daniela}\binits{D.}} \AND
  \bauthor{\bsnm{Somersalo},~\bfnm{Erkki}\binits{E.}}
(\byear{2013}).
\btitle{Linear multistep methods, particle filtering and sequential Monte
  Carlo}.
\bjournal{Inverse Problems}
\bvolume{29}
\bpages{085007}.
\end{barticle}
\endbibitem

\bibitem[\protect\citeauthoryear{Capistr{\'a}n, Christen and
  Donnet}{2013}]{capistran2013bayesian}
\begin{barticle}[author]
\bauthor{\bsnm{Capistr{\'a}n},~\bfnm{Marcos}\binits{M.}},
  \bauthor{\bsnm{Christen},~\bfnm{J~Andr{\'e}s}\binits{J.~A.}} \AND
  \bauthor{\bsnm{Donnet},~\bfnm{Sophie}\binits{S.}}
(\byear{2013}).
\btitle{Bayesian Analysis of ODE's: solver optimal accuracy and Bayes factors}.
\bjournal{arXiv preprint arXiv:1311.2281}.
\end{barticle}
\endbibitem

\bibitem[\protect\citeauthoryear{Chkrebtii
  et~al.}{2013}]{chkrebtii2013bayesian}
\begin{barticle}[author]
\bauthor{\bsnm{Chkrebtii},~\bfnm{Oksana~A}\binits{O.~A.}},
  \bauthor{\bsnm{Campbell},~\bfnm{David~A}\binits{D.~A.}},
  \bauthor{\bsnm{Girolami},~\bfnm{Mark~A}\binits{M.~A.}} \AND
  \bauthor{\bsnm{Calderhead},~\bfnm{Ben}\binits{B.}}
(\byear{2013}).
\btitle{Bayesian uncertainty quantification for differential equations}.
\bjournal{arXiv preprint arXiv:1306.2365}.
\end{barticle}
\endbibitem

\bibitem[\protect\citeauthoryear{Coulibaly and L{\'e}cot}{1999}]{coul99}
\begin{barticle}[author]
\bauthor{\bsnm{Coulibaly},~\bfnm{Ibrahim}\binits{I.}} \AND
  \bauthor{\bsnm{L{\'e}cot},~\bfnm{Christian}\binits{C.}}
(\byear{1999}).
\btitle{A quasi-randomized Runge-Kutta method}.
\bjournal{Mathematics of Computation of the American Mathematical Society}
\bvolume{68}
\bpages{651--659}.
\end{barticle}
\endbibitem

\bibitem[\protect\citeauthoryear{Dashti and Stuart}{2016}]{DS15}
\begin{bincollection}[author]
\bauthor{\bsnm{Dashti},~\bfnm{M.}\binits{M.}} \AND
  \bauthor{\bsnm{Stuart},~\bfnm{A.~M.}\binits{A.~M.}}
(\byear{2016}).
\btitle{The Bayesian Approach to Inverse Problems}.
In \bbooktitle{Handbook of Uncertainty Quantification}
(\beditor{\bfnm{R.}\binits{R.}~\bsnm{Ghanem}},
  \beditor{\bfnm{D.}\binits{D.}~\bsnm{Higdon}} \AND
  \beditor{\bfnm{H.}\binits{H.}~\bsnm{Owhadi}}, eds.)
\bpublisher{Springer}.
\end{bincollection}
\endbibitem

\bibitem[\protect\citeauthoryear{Diaconis}{1988}]{diaconis1988bayesian}
\begin{barticle}[author]
\bauthor{\bsnm{Diaconis},~\bfnm{Persi}\binits{P.}}
(\byear{1988}).
\btitle{Bayesian numerical analysis}.
\bjournal{Statistical Decision Theory and Related Topics IV}
\bvolume{1}
\bpages{163--175}.
\end{barticle}
\endbibitem

\bibitem[\protect\citeauthoryear{Eriksson}{1996}]{eriksson1996computational}
\begin{bbook}[author]
\bauthor{\bsnm{Eriksson},~\bfnm{K.}\binits{K.}}
(\byear{1996}).
\btitle{Computational Differential Equations}
\bvolume{v. 1}.
\bpublisher{Cambridge University Press}.
\end{bbook}
\endbibitem

\bibitem[\protect\citeauthoryear{Haario et~al.}{2006}]{haario2006dram}
\begin{barticle}[author]
\bauthor{\bsnm{Haario},~\bfnm{Heikki}\binits{H.}},
  \bauthor{\bsnm{Laine},~\bfnm{Marko}\binits{M.}},
  \bauthor{\bsnm{Mira},~\bfnm{Antonietta}\binits{A.}} \AND
  \bauthor{\bsnm{Saksman},~\bfnm{Eero}\binits{E.}}
(\byear{2006}).
\btitle{DRAM: efficient adaptive MCMC}.
\bjournal{Statistics and Computing}
\bvolume{16}
\bpages{339--354}.
\end{barticle}
\endbibitem

\bibitem[\protect\citeauthoryear{Hairer, Lubich and Wanner}{2006}]{hairer1}
\begin{bbook}[author]
\bauthor{\bsnm{Hairer},~\bfnm{Ernst}\binits{E.}},
  \bauthor{\bsnm{Lubich},~\bfnm{Christian}\binits{C.}} \AND
  \bauthor{\bsnm{Wanner},~\bfnm{Gerhard}\binits{G.}}
(\byear{2006}).
\btitle{Geometric numerical integration: structure-preserving algorithms for
  ordinary differential equations}
\bvolume{31}.
\bpublisher{Springer Science \& Business Media}.
\end{bbook}
\endbibitem

\bibitem[\protect\citeauthoryear{Hairer, McLachlan and
  Razakarivony}{2008}]{hairer2008achieving}
\begin{barticle}[author]
\bauthor{\bsnm{Hairer},~\bfnm{Ernst}\binits{E.}},
  \bauthor{\bsnm{McLachlan},~\bfnm{Robert~I}\binits{R.~I.}} \AND
  \bauthor{\bsnm{Razakarivony},~\bfnm{Alain}\binits{A.}}
(\byear{2008}).
\btitle{Achieving Brouwer?s law with implicit Runge--Kutta methods}.
\bjournal{BIT Numerical Mathematics}
\bvolume{48}
\bpages{231--243}.
\end{barticle}
\endbibitem

\bibitem[\protect\citeauthoryear{Hairer, N{\o}rsett and
  Wanner}{1993}]{hairer1993solving}
\begin{bbook}[author]
\bauthor{\bsnm{Hairer},~\bfnm{E.}\binits{E.}},
  \bauthor{\bsnm{N{\o}rsett},~\bfnm{S.~P.}\binits{S.~P.}} \AND
  \bauthor{\bsnm{Wanner},~\bfnm{G.}\binits{G.}}
(\byear{1993}).
\btitle{Solving Ordinary Differential Equations I: Nonstiff Problems}.
\bseries{Solving Ordinary Differential Equations}.
\bpublisher{Springer}.
\end{bbook}
\endbibitem

\bibitem[\protect\citeauthoryear{Hennig and Hauberg}{2014}]{Hennig2014}
\begin{binproceedings}[author]
\bauthor{\bsnm{Hennig},~\bfnm{Philipp}\binits{P.}} \AND
  \bauthor{\bsnm{Hauberg},~\bfnm{S{\o}ren}\binits{S.}}
(\byear{2014}).
\btitle{Probabilistic Solutions to Differential Equations and their Application
  to Riemannian Statistics}.
In \bbooktitle{Proc. of the 17th int. Conf. on Artificial Intelligence and
  Statistics (AISTATS)}
\bvolume{33}.
\end{binproceedings}
\endbibitem

\bibitem[\protect\citeauthoryear{Hennig, Osborne and
  Girolami}{2015}]{Hennig2015}
\begin{barticle}[author]
\bauthor{\bsnm{Hennig},~\bfnm{P.}\binits{P.}},
  \bauthor{\bsnm{Osborne},~\bfnm{M.~A.}\binits{M.~A.}} \AND
  \bauthor{\bsnm{Girolami},~\bfnm{M}\binits{M.}}
(\byear{2015}).
\btitle{Probabilistic Numerics and Uncertainty in Computations}.
\bjournal{Proceedings of the Royal Society A}
\bvolume{in press}.
\end{barticle}
\endbibitem

\bibitem[\protect\citeauthoryear{Johnson}{2012}]{jo92}
\begin{bbook}[author]
\bauthor{\bsnm{Johnson},~\bfnm{C.}\binits{C.}}
(\byear{2012}).
\btitle{Numerical Solution of Partial Differential Equations by the Finite
  Element Method}.
\bseries{Dover Books on Mathematics Series}.
\bpublisher{Dover Publications, Incorporated}.
\end{bbook}
\endbibitem

\bibitem[\protect\citeauthoryear{Kailath}{1967}]{Kailath1967bhattacharyya}
\begin{barticle}[author]
\bauthor{\bsnm{Kailath},~\bfnm{T.}\binits{T.}}
(\byear{1967}).
\btitle{The Divergence and Bhattacharyya Distance Measures in Signal
  Selection}.
\bjournal{Communication Technology, IEEE Transactions on}
\bvolume{15}
\bpages{52-60}.
\bdoi{10.1109/TCOM.1967.1089532}
\end{barticle}
\endbibitem

\bibitem[\protect\citeauthoryear{Kaipio and
  Somersalo}{2007}]{kaipio2007statistical}
\begin{barticle}[author]
\bauthor{\bsnm{Kaipio},~\bfnm{Jari}\binits{J.}} \AND
  \bauthor{\bsnm{Somersalo},~\bfnm{Erkki}\binits{E.}}
(\byear{2007}).
\btitle{Statistical inverse problems: discretization, model reduction and
  inverse crimes}.
\bjournal{Journal of Computational and Applied Mathematics}
\bvolume{198}
\bpages{493--504}.
\end{barticle}
\endbibitem

\bibitem[\protect\citeauthoryear{Kennedy and
  O'Hagan}{2001}]{kennedy2001bayesian}
\begin{barticle}[author]
\bauthor{\bsnm{Kennedy},~\bfnm{Marc~C}\binits{M.~C.}} \AND
  \bauthor{\bsnm{O'Hagan},~\bfnm{Anthony}\binits{A.}}
(\byear{2001}).
\btitle{Bayesian calibration of computer models}.
\bjournal{Journal of the Royal Statistical Society: Series B (Statistical
  Methodology)}
\bvolume{63}
\bpages{425--464}.
\end{barticle}
\endbibitem

\bibitem[\protect\citeauthoryear{Lindgren, Rue and
  Lindstr{\"o}m}{2011}]{lindgren2011explicit}
\begin{barticle}[author]
\bauthor{\bsnm{Lindgren},~\bfnm{Finn}\binits{F.}},
  \bauthor{\bsnm{Rue},~\bfnm{H{\aa}vard}\binits{H.}} \AND
  \bauthor{\bsnm{Lindstr{\"o}m},~\bfnm{Johan}\binits{J.}}
(\byear{2011}).
\btitle{An explicit link between Gaussian fields and Gaussian Markov random
  fields: the stochastic partial differential equation approach}.
\bjournal{Journal of the Royal Statistical Society: Series B (Statistical
  Methodology)}
\bvolume{73}
\bpages{423--498}.
\end{barticle}
\endbibitem

\bibitem[\protect\citeauthoryear{{Medina-Aguayo}, {Lee} and
  {Roberts}}{2015}]{medina2015noisymh}
\begin{barticle}[author]
\bauthor{\bsnm{{Medina-Aguayo}},~\bfnm{F.~J.}\binits{F.~J.}},
  \bauthor{\bsnm{{Lee}},~\bfnm{A.}\binits{A.}} \AND
  \bauthor{\bsnm{{Roberts}},~\bfnm{G.~O.}\binits{G.~O.}}
(\byear{2015}).
\btitle{{Stability of Noisy Metropolis-Hastings}}.
\bjournal{ArXiv e-prints}.
\end{barticle}
\endbibitem

\bibitem[\protect\citeauthoryear{Ramsay et~al.}{2007}]{ramsay2007parameter}
\begin{barticle}[author]
\bauthor{\bsnm{Ramsay},~\bfnm{Jim~O}\binits{J.~O.}},
  \bauthor{\bsnm{Hooker},~\bfnm{G}\binits{G.}},
  \bauthor{\bsnm{Campbell},~\bfnm{D}\binits{D.}} \AND
  \bauthor{\bsnm{Cao},~\bfnm{J}\binits{J.}}
(\byear{2007}).
\btitle{Parameter estimation for differential equations: a generalized
  smoothing approach}.
\bjournal{Journal of the Royal Statistical Society: Series B (Statistical
  Methodology)}
\bvolume{69}
\bpages{741--796}.
\end{barticle}
\endbibitem

\bibitem[\protect\citeauthoryear{Schober, Duvenaud and
  Hennig}{2014}]{schober2014probabilistic}
\begin{binproceedings}[author]
\bauthor{\bsnm{Schober},~\bfnm{Michael}\binits{M.}},
  \bauthor{\bsnm{Duvenaud},~\bfnm{David~K}\binits{D.~K.}} \AND
  \bauthor{\bsnm{Hennig},~\bfnm{Philipp}\binits{P.}}
(\byear{2014}).
\btitle{Probabilistic ODE solvers with Runge-Kutta means}.
In \bbooktitle{Advances in Neural Information Processing Systems}
\bpages{739--747}.
\end{binproceedings}
\endbibitem

\bibitem[\protect\citeauthoryear{Skilling}{1992}]{skilling1992bayesian}
\begin{bincollection}[author]
\bauthor{\bsnm{Skilling},~\bfnm{John}\binits{J.}}
(\byear{1992}).
\btitle{Bayesian solution of ordinary differential equations}.
In \bbooktitle{Maximum Entropy and Bayesian Methods}
\bpages{23--37}.
\bpublisher{Springer}.
\end{bincollection}
\endbibitem

\bibitem[\protect\citeauthoryear{Stengle}{1995}]{sten95}
\begin{barticle}[author]
\bauthor{\bsnm{Stengle},~\bfnm{Gilbert}\binits{G.}}
(\byear{1995}).
\btitle{Error analysis of a randomized numerical method}.
\bjournal{Numerische Mathematik}
\bvolume{70}
\bpages{119--128}.
\end{barticle}
\endbibitem

\bibitem[\protect\citeauthoryear{Sullivan}{2016}]{S15}
\begin{bbook}[author]
\bauthor{\bsnm{Sullivan},~\bfnm{T}\binits{T.}}
(\byear{2016}).
\btitle{Uncertainty Quantification}.
\bpublisher{Springer}.
\end{bbook}
\endbibitem

\end{thebibliography}

\end{document}